\documentclass[11pt,a4paper]{article}
\usepackage{fullpage}
\usepackage{lmodern}
\usepackage[T1]{fontenc}
\usepackage[utf8]{inputenc}
\usepackage[tbtags]{amsmath}
\allowdisplaybreaks
\usepackage{amssymb,amsthm}
\usepackage{float}
\usepackage{hyperref}
\usepackage[svgnames]{xcolor}
\hypersetup{colorlinks={true},linkcolor={DarkBlue},citecolor=[named]{DarkGreen}}
\usepackage{tikz} 
\usepackage{pgfplots}
\usetikzlibrary{automata, positioning, calc, arrows}
\usepgfplotslibrary{patchplots, fillbetween}
\usepackage{hyphenat}
\usepackage[font=footnotesize]{caption}


\newtheorem{theorem}{Theorem}
\newtheorem{lemma}{Lemma}

\theoremstyle{definition}
\newtheorem{definition}{Definition}
\newtheorem{claim}{Claim}


\newcommand{\sset}[1]{\left\{ #1\right\}}

\begin{document}

\title{Blockchain Mining Games with Pay Forward\thanks{This work was 
generously 
supported by the ERC Advanced Grant 321171 (ALGAME) and by the ERC 
Advanced Grant 788893 (AMDROMA).}}

\author{Elias Koutsoupias\thanks{
		Department of Computer Science, University of Oxford.
		Email: \href{mailto:elias@cs.ox.ac.uk}{\nolinkurl{elias@cs.ox.ac.uk}}}
\and Philip Lazos\thanks{
		Department of Computer Science, Sapienza University of Rome.
		Email: \href{mailto:plazos@gmail.com}{\nolinkurl{plazos@gmail.com}}}
\and Paolo Serafino\thanks{
		Gran Sasso Research Institute. 
		Email: 
		\href{mailto:paolo.serafino@gssi.it}{\nolinkurl{paolo.serafino@gssi.it}}}
\and Foluso Ogunlana\thanks{
		Consensys.
		Email: 
		\href{mailto:foogunlana@yahoo.com}{\nolinkurl{foogunlana@yahoo.com}}}}

\newcommand{\pf}{w}
\newcommand{\PF}{\textsc{PF}}
\newcommand{\frontier}{\textsc{Frontier}}
\newcommand{\strictfrontier}{\textsc{StrictFrontier}}
\newcommand{\totalgain}{g^\star}
\def\bitcoin{\leavevmode\rlap{\hskip.5pt-}B}

\pgfplotsset{width=8cm}

\maketitle

\begin{abstract}
We study the strategic implications that arise from adding one
extra option to the miners participating in the bitcoin
protocol. We propose that when adding a block, miners also have
the ability to pay forward an amount to be collected by
the first miner who successfully extends their branch, giving
them the power to influence the incentives for mining. We
formulate a stochastic game for the study of such incentives and
show that with this added option, smaller miners can guarantee
that the best response of even substantially more powerful miners
is to follow the expected behavior intended by the protocol
designer.
\end{abstract}

\section{Introduction}
Bitcoin, currently the most-widely used cryptocurrency, was introduced
in a 2009 white paper~\cite{nakamoto} by the mysterious Satoshi
Nakamoto as a form of decentralized, distributed, peer-to-peer digital
currency. The backbone of the bitcoin
protocol is the \emph{blockchain}, a distributed ledger that (ideally)
takes the form of a chain where bitcoin \emph{transactions} are stored
into \emph{blocks}.  Blocks are created by special nodes of the
bitcoin network called miners that: (\emph{i}) collect and validate
the set of transactions to be included in a block and (\emph{ii})
solve a crypto-puzzle (the so-called \emph{proof of work}) that
cryptographically links the newly created (\emph{mined}) block to the
tail of the existing blockchain.  The main purpose of the blockchain
is to solve the problem of \emph{distributed consensus}, where the
consensus to be obtained is on the history (and relative order) of the
bitcoin transactions.

Since mining new blocks is a computationally intensive (and expensive)
task, miners need a reward scheme to keep mining blocks.  Bitcoin has
two main reward schemes for miners: (\emph{i}) \emph{transaction
	fees}, that are left by bitcoin users on a voluntary basis (and
hence vary in frequency and size) and \emph{coinbase} (or block)
rewards, which are constituted by a fixed set amount (currently
12.5\bitcoin{}) of newly minted bitcoins. Coinbase rewards are the
only way by which bitcoins can ever be created, and the protocol
specifies that (to avoid inflation) coinbase rewards be decreasing
over time, until the limit of 21 million bitcoins are created: at that
point the only incentive scheme left for miners will be transaction
fees. Currently, most of the rewards for mining a block are coinbase.

In a distributed setting populated by selfish miners, however, the
blockchain will hardly be a chain at all, but rather a
\emph{tree}. For instance, network delays may lead miners to add newly
mined blocks to different blocks that they believed to be the tail of the
chain. For this reason, newly mined blocks are not part of the
consensus until sufficient time has passed since their creation (i.e.,
$d$ levels of other blocks are added to the blockchain -- currently
$d=100$), and block rewards become available to miners only then.
Worse still, even though the bitcoin protocol prescribes that miners
should mine from the last known block in the chain (the so-called
\frontier{} strategy), self-interested miners may try to create a
\emph{fork} by intentionally adding a sequence of blocks (a
\emph{branch}) constituting a parallel history of the transactions, in
an attempt to reap more block rewards or to double-spend bitcoins
(once per each branch of the fork).

It should be clear from now that the bitcoin protocol is rife with
game-theoretic issues. Nakamoto \cite{nakamoto} analyzed in a
simple model, providing rough estimates showing that if a large
majority of miners follow \frontier{}, then their chain will be the
longest and contain the agreed upon history.

Since Nakamoto's paper, there has been significant work about the
strategies of miners, mainly under the assumption that the reward per
block is fixed (see for example \cite{Eyal2014,
	sapirshtein2016optimal, blockchain2016}).  In particular, these
\emph{mining games} have been systematically evaluated through the
lens of game-theory by Kiayias et al.~\cite{blockchain2016}. They
considered mining games in which miners can influence the blockchain
in two ways: by strategically choosing to mine at different branches
in an attempt to overtake the longest branch and by withholding their
mined blocks and releasing them at the right time, wasting everyone
else's computational power.  In~\cite{sapirshtein2016optimal,
	blockchain2016}, it was shown by both formal proofs and simulations
that the Nakamoto protocol is stable when no miner has computation
power more than $0.33$ of the total. Furthermore, it was also shown that when 
the
miners do not withhold blocks, the protocol is stable if no miner has
computation power more than $0.42$. 


This raises the issue of \emph{fairness} and \emph{compliance} in the
bitcoin protocol. By deviating from the protocol and mining
strategically, large miners could affect the network in two ways.  By
increasing their own rewards disproportionately to their computational
power they limit the mining rewards claimed by smaller miners, which
is unfair. Also, their constant forking in an attempt to add more
blocks upsets the protocol and can cause delays, as well as negate
transactions that have been already processed.

In this work, we propose a slight modification to the protocol whereby
miners can entice other miners to mine at their block by adding a
\emph{pay-forward} amount. This pay-forward amount is collected by the
miner of the next block, thus providing incentives to other miners to
try to extend the tree from this particular branch. When we extend the
available strategies for the miners by adding the ability to
pay-forward, we reclaim some stability by ensuring that the attacking
(computationally powerful) miner is incentivized to follow the
protocol, even if he still reaps more than his fair share of rewards.

The main technical contribution is the study of mining
games in which the strategies are extended by pay-forward. We show
that with this extension, the stability of the protocol increases
significantly: the computational power needed by a dishonest miner to
disrupt the blockchain is substantially higher. This shows an
interesting trade-off for small miners: on one hand they lose the
pay-forward amount but on the other hand, they provide the right
incentives to large miners to play $\frontier$ and thus secure that
their block will end up in the longest chain.

\subsection{Our Results}
We consider two types of \emph{stochastic games}, whose states are
rooted trees. The game is played in discrete time-steps. At the
beginning of each time-step, every miner chooses a block and tries to
mine from it. Each player $i$ has probability $p_i$ to mine a new
block, proportional to his computational power. In the end, the new
block may be added to the blockchain or kept hidden and released
later. After many rounds of playing, the utility of each player is the
fraction of bitcoins he owns (from block and pay-forward rewards) over
the total value that has been mined in the longest chain.

Let $p$ be the computational power of the strongest player, called
Miner 1. If $p$ is large enough, his best response, given that
everyone else is playing $\frontier$, may be a different strategy. We
find thresholds on $p$ and $w$, the pay forward amount of every other
player, to guarantee that mining at the end of the longest chain is a
best response for Miner 1. As in \cite{blockchain2016}, we consider
two variants. In the immediate release case, any mined block has to be
added to the blockchain immediately for other players to use. In this
case, without pay forward rewards, it was proven in
\cite{blockchain2016} that $\frontier$ is a best response for Miner 1
for $p < h$, where $0.361 \le h \le 0.455$. Experimentally, they also
showed that there is always a deviating strategy for $h \approx
0.42$. In the strategic release case, the newly mined blocks are
public knowledge, but can only be used by other players when their
creators decide to \emph{release them}. As before, it was shown in
\cite{blockchain2016} that there is a threshold $\hat{h} \ge 0.308$
(experimentally shown to be approximately $0.33$ \cite{blockchain2016,
	sapirshtein2016optimal}) such that $p < \hat{h}$ leads to
$\frontier$ as a best response for Miner 1.

We improve these thresholds to $h = 0.5$ and $\hat{h} \ge 0.344$ by
showing that there exists some $w$ (as a function of $p$) that
guarantees Miner 1's best response is to play $\frontier$ when all
remaining miners play $\frontier$ and pay forward $w$. In addition, we
experimentally show that for $h \le 0.44$ that outcome is also a pure
Nash equilibrium for the immediate release case, while for the
strategic release case we find that $\hat{h} \approx 0.38$. We also
devise a linear program to calculate
the minimum value of $w$ for any given $p$.


\subsection{Related Work}\label{sec:related_work}
Bitcoin, originally introduced by Nakamoto in \cite{nakamoto} and
followed by several cryptocurrencies, such as Litecoin, Ethereum and
Ripple, initiated the research on blockchains.  Nakamoto's original
paper analyses double spending attacks while Rosenfeld provides a more
detailed analysis \cite{rosenfeld2014analysis}.  Bonneau et
al. \cite{bonneauetal} and Tschorsch and Scheuermann \cite{tschorsch}
provided extensive surveys of the research and challenges in
cryptocurrencies.

Kroll et al. \cite{kroll} was one the first papers to consider the
economics of Bitcoin mining, assuming that participants behave
according to their incentives. Eyal and Sirer \cite{Eyal2018} showed
that the security of bitcoin is not guaranteed by a majority of honest
miners as was previously assumed. They gave a specific mining strategy
and argued that a pool of miners, with at least a $1/3$ of the total
processing power, can get extra profit regardless of the block
propagation characteristics of the network.  With sufficiently
favorable block propagation, this threshold of processing power falls
to $0$. Extending this, Sapirshtein et
al. \cite{sapirshtein2016optimal} provided systematic analysis of the
space of selfish mining strategies based on computational results.

A similar approach was taken by Kiayias et al. \cite{blockchain2016}
who provided a framework for studying the strategic considerations
made by Bitcoin miners. They formulated two abstract stochastic games
and proved rigorous bounds on the threshold of computational power
below which the honest strategy is a Nash Equilibrium. Carlsten et
al. \cite{carlsten2016instability} extended the previous approaches by
investigating mining games when the reward for miners varies and comes
mainly from transaction fees. They observed that the random block
arrival times lead to high variance in rewards and they considered
strategies that lead to instability. They showed via simulation that
an equilibrium exists, but it has the counter intuitive and undesirable
effect of a growing backlog of unprocessed transactions.

Several other studies examine possible attacks on the protocol and
suggest adaptations to ensure its security. Rosenfeld
\cite{rosenfeld_pool} and Courtois and Bahack
\cite{DBLP:journals/corr/CourtoisB14} discuss pool mining
attacks. Eyal \cite{Eyal2014} introduces an attack in which mining
pools infiltrate one another resulting in a pool game. Lewenberg et
al.~\cite{Lewenberg2015} also provide a game theoretic analysis of
pool mining. Babaioff et al.~\cite{Babaioff2011} consider Sybil
attacks on the network and propose a reward scheme to prevent miners
from hiding transactions in competition with other miners.

There is also considerable work on the performance and scalability of
blockchain inspired networks. Sompolinsky and Zohar's
\cite{sompolinsky2015secure} Greedy-Heaviest-Observed-Sub-Tree rule is
an alternative consensus mechanism. Eyal \cite{eyal2016bitcoin}
proposes Bitcoin-NG which increases the throughput of Bitcoin. Poon
and Dryja's Lightning Network \cite{poon2016bitcoin} scales via
off-chain transactions and hashed commitments. Sompolinsky and Zohar's
PHANTOM \cite{DBLP:journals/iacr/SompolinskyZ18} achieves near
unlimited transaction throughput. SPECTRE
\cite{DBLP:journals/iacr/SompolinskyLZ16}, by Sompolinsky et al, fully
orders the transactions in blocks using recursive elections and can
used in combination with PHANTOM. Kiayias et al. developed
Ouroboros\cite{ouroboros}, a proof of stake network with provable
security, utilising a secure multi-party coin flipping protocol.

\section{Model and Notation}
The model contains subtleties in the abstractions that at first glance
might be overlooked by readers who are too familiar with the bitcoin
protocol. We do our best to point out any of those instances, but be
especially vigilant when reading about the way payoffs are calculated.\\

\noindent
The bitcoin mining game with pay forward is an abstraction of the
actual protocol, simplified in a way that can only accentuate its game
theoretic issues (i.e. negative results hold a fortiori for the actual
protocol) while being open to rigorous analysis. The parameters of the
game are:
\begin{enumerate}
	\item the number $n$ of miners (or players).
	\item the probabilities $p_1,\ldots, p_n$, representing the \emph{hash
		power} of each miner, which is the probability that they solve the
	crypto-puzzle. They are proportional to the computational power of
	each miner and such that: $p_i\geq 0$ for each $i$ and
	$\sum_{i=1}^n p_i = 1$.
	\item the depth $d$ after which mined blocks are `paid', i.e. their
	coinbase becomes usable by the respective miner. Without loss of
	generality we mostly consider $d=\infty$ which gives the attacker
	extra power.
\end{enumerate}

We assume that each block has a \emph{fixed reward}, which by
appropriate scaling is equal to 1. This is reasonable, as the large
majority of rewards come from the fixed coinbase rewards. The
fluctuating transaction fees (while not insignificant) are still
comparatively small.

During the execution of the protocol, miners add their blocks to the
blockchain one at a time\footnote{ Due to the decentralized nature of
	bitcoin, it is possible for more than one miners to mine a block
	simultaneously. This event does occur by chance, but it is very rare
	and it is not something a miner can plan for. For this reason, we
	assume that exactly one block is mined at every step, but most of
	our analysis could be easily extended to the case in which each
	miner succeeds with probability $p_i$ independently.}.  Their goal
is to maximize the fraction of their blocks on the longest branch of
the blockchain, as the longest branch represents the consensus
 and all other branches are pruned and the computational power 
spent for
their creation is wasted.  This does not mean that all the miners try
to extend the longest branch at every time step, because some times
they might benefit by either adding a block to a shorter branch or
withholding it for later.

\begin{definition}
	A public state is a rooted tree. Every node is labelled by one of the
	players and the amount of money he pays-forward. The nodes represent
	mined blocks and the label indicates the player who mined the
	block. The pay-forward amount is collected by the next miner in the
	longest branch. Every level of the tree has at most one node labelled
	$i$ because there is no reason for a player to mine twice the same
	level.
	
	A private state of player $i$ is similar to the public state except
	it may contain more nodes called private nodes and labelled by
	$i$. The public tree is a subtree of the private tree and has the
	same root.
\end{definition}

The incomplete information case (where the $p_i$'s or private states
are unknown) is significantly more complicated. In this work we only
consider the full information case where even the private states are
common knowledge (but only become part of the public state when their
respective miners add them). We consider two variants:
\begin{description}
	\item[Immediate-release model] Every mined block is added to the
	public tree immediately.
	\item[Strategic-release model] Mined blocks can be withheld: every
	miner is aware of their existence, but they can only mine from them
	once they are added to the public tree.
\end{description}
The second model has no counterpart in practice, but it serves as an
intermediate step between the full and incomplete information models
and allows us to study issues around strategic release of blocks. If a
strategy is not dominant in this model it cannot be dominant in the
incomplete information setting.

\begin{definition}[Strategy]
	The strategy of miner $i$ can be fully characterized by three
	functions $\mu_i,r_i, \PF_i$:
	\begin{itemize}
		\item the mining function $\mu_i$ selects a block of the public
		state to mine from
		\item the release function $r_i$ which is the rooted tree he
		releases to the public. It is a subtree of his private state that
		contains the public state.
		\item the pay-forward function $\PF_i$ which is the amount of money
		to be left for the next miner.
	\end{itemize}
	Both functions depend on the public and private states of every player.
\end{definition}

The original suggested strategy in \cite{nakamoto} is $\frontier$.
\begin{definition}[\frontier]
	A miner follows strategy $\frontier$ if he releases mined blocks
	immediately, always mines at the deepest node in the blockchain and
	pays forward 0 \bitcoin.
\end{definition}
In the following, we refer to $\frontier(w)$ as the strategy that mines and 
releases like $\frontier$ but always pays forward $w$.

The game is played in \emph{phases}. At each phase exactly one miner
will mine a block. Then he will choose if he wants to release it, which
may trigger a cascade of releases from other players. Eventually, once
no one else wants to release anything the public knowledge is updated
and the next phase begins.  Even if no one releases a block, miners
know when the phase ends since everything is public knowledge.

To incentivise miners to mine new blocks, payments are necessary. When
a block is added, it is unclear if it is going to remain on the
longest branch permanently.  To remedy this, blocks are paid for only
after their branch is increased by $d$ blocks, after which time it is
safe to assume it will stay in the longest branch.  Blocks are paid
through coinbase, transaction fees and pay-forward.  Currently,
coinbase rewards (which can be claimed at $d=100$) far outweigh
transaction fees.  However, by design Bitcoin will eventually do away
with coinbase rewards to limit inflation, at which point only
transaction fees will remain, leading to potential
instability~\cite{carlsten2016instability}. We only consider coinbase
rewards in this work. For game theoretic analysis we rigorously define
payments:
\begin{definition}[Payments]
	For some nodes of the tree, the miners who discovered them will get
	a payment (coinbase rewards are normalized to 1). The payments
	comply with the following rules:
	\begin{itemize}
		\item the blocks that receive a payment must form a path from
		the root.  This immediately adds the restriction that at
		every level of the tree exactly one node receives payment.
		\item among the blocks of a single level that satisfy the
		above condition, the first one which succeeds in having a
		descendant $d$ generations later receives payment.
	\end{itemize}
	Since only one block per level is paid for and they form the longest
	branch, the utility of a miner in the long run is defined as the
	fraction of pay-forward and coinbase rewards he obtained minus the
	pay-forward he paid in that branch, over it's length.
	
\end{definition}
When a block is paid for, strategic miners will ignore any branch that
starts at an earlier node. This limits the shape of the public state
to a long path (the \emph{trunk}) with ignored \emph{stale} branches
dangling from it. Only at the end of the trunk we may find multiple
active branches, competing to reach length $d$ and render the others
stale. We will also consider the case where $d=\infty$. Here there is
the possibility that two competing branches will go on forever, but
since one of them will have less computational power, a case of
gamblers ruin makes this impossible.

\section{Immediate Release}\label{sec:immediate_release}
In this section we show that if $\frontier(w)$  (for some value of $w$) is followed 
by all miners but one, the remaining miner's best
response is $\frontier$, provided his hash power is $p < 0.5$. Since
only one miner might deviate, we can view this situation as Miner 1
being the potentially deviant large miner with hash power $p<0.5$ and
Miner 2 all small miners combined into one, which we call the honest
miners, since they follow the same strategy.

In essence, every honest player mines at the deepest node and always
pays forward $w$. This amount propagates across the small miners
forming Miner 2 but Miner 1 can claim it for himself, leading to an
interesting stochastic game where Miner 1 is incentivise to mine the
longest chain, since the rewards don't accumulate.

The blockchain itself is a rooted tree, but after pruning abandoned
branches we are left with one long, undisputed, main path followed by
two branches of length $a$ and $b$, mined by Miner 1 and Miner 2
respectively. Therefore, the possible states of the game have the form
$(a,b,c)$, where $a \le b + 1$ since Miner 2 always mines the longest
branch (except briefly, right when Miner 1's branch overtakes the
longest) and $c$ is either 0 or 1, to indicate whether the \emph{last
	block before the fork} contained a pay forward reward.

\begin{figure}[H]
	\begin{center}
		\begin{tikzpicture}[scale=.7, transform shape]
		\tikzstyle{every node} = [rectangle, fill=gray!30,font=\Large]
		\node (x1) at (-3, 0) {1};
		\node (x2) at (-2, 0) {2};
		\node (x3) at (0, 0) {2};
		\node (x4) at (1, 0) {1};
		\node (x5) at (2, 0) {2};
		\node (x6) at (3, 1) {2};    
		\node (x7) at (3, -1) {1};        
		\node (x8) at (4, -1) {1};    
		\node (x9) at (5, -1) {1};    
		\node (x10) at (4, 1) {2};    
		\node (x11) at (5, 1) {2};    
		\node (x12) at (6, 1) {2};    
		\node (x13) at (7, 1) {2};    
		
		\draw[black,<-] (x1) -- (x2);
		\draw[black,<-,dotted] (x2) -- (x3);
		\draw[black,<-] (x3) -- (x4);
		\draw[black,<-] (x4) -- (x5);
		\draw[black,<-] (x5) -- (x6);
		\draw[black,<-] (x5) -- (x7);
		
		\draw[black,<-] (x7) -- (x8);
		\draw[black,<-] (x8) -- (x9);
		\draw[black,<-] (x6) -- (x10);
		\draw[black,<-] (x10) -- (x11);
		\draw[black,<-] (x11) -- (x12);
		\draw[black,<-] (x12) -- (x13);
		
		\draw[black,|-|] (-3.2,-1)
		--node[draw=none,fill=none,below]{Trunk} (2,-1); 
		
		\draw[black,|-|] (2.8,-2)
		--node[draw=none,fill=none,below]{$a=3$} (5.2,-2); 
		
		\draw[black,|-|] (2.8,2) --node[draw=none,fill=none,above]{$b=5$} (7.2,2);
		\end{tikzpicture}
	\end{center}
	\caption[A typical state tree.]{A typical state tree. The trunk
		represents the blocks whose rewards have already been
		collected. The current state $(3,5,1)$ of the game is represented
		by the blocks mined by Miner 1 and Miner 2 and $c=1$ because Miner
		2 controls the block before the fork.}
\end{figure}
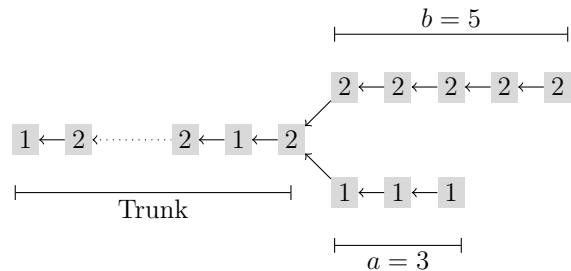

The set of states $(a,b,c)$ can be partitioned as follows:
\begin{enumerate}
	\item \textbf{Winning states}: the set $W$ of states where Miner 2
	capitulates and starts mining from the tip of Miner 1's branch. This
	happens exactly when $a = b + 1$, therefore
	$W = \{(b+1,b,c) \mid b \ge 0 \text{ and } c \in
	\sset{0,1}\}$. After Miner 1 overtakes, the new state of the game is
	$(0,0,0)$ since they both mine at the same point.
	
	\item \textbf{Capitulation states}: the set $C$ of states where Miner
	1 capitulates, abandons his branch and mines from a block of Miner
	2's branch, thus moving the game to state $(0,s,1)$ for some $s$
	s.t. $0\leq s<b$.  We say that Miner 1 capitulates \emph{at}
	$(a,b,c)$ and \emph{to} $(0,s,1)$. Clearly, after capitulating there
	would be a pay forward reward available.
	
	\item \textbf{Mining states}: the set $M$ of states where Miner 1 and
	Miner 2 mine their respective branches.
\end{enumerate}
Miner 1 can capitulate to any state $(0,s,1)$ and will always choose
the one that maximizes his payoff. Since he is rational, when
capitulating from state $(a,b,c)$ he would only go to states
$(0, s, 1)$ with $s < b$, otherwise he would be undercutting his own
tentative branch. The strategy $\frontier$ has
$M = \sset{(0,0,0), (0,0,1)}$, $C =
\sset{(0,1,0),(0,1,1)}$ and always capitulates to $(0,0,1)$.\\

Let $g_k(a,b,c)$ denote the optimal expected gain of Miner 1 starting
from state $(a,b,c)$ when the longest chain is extended by $k$
levels. Knowing Miner 1 will never pay-forward, we can recursively
define:
\begin{equation}\label{eq:gain_at_k_immediate}
	g_k(a,b,c) = 
	\begin{cases}
		g_{k-1}(0,0,0) + a + w\cdot c \quad\quad\text{if } a = b + 1\\
		\max 
		\begin{cases}
			\max_{s=0,\ldots,b-1}g_k(0,s,1)\\
			pg_k(a+1,b,c) + (1-p)g_{k-1}(a,b+1,c) 
		\end{cases}
	\end{cases}.
\end{equation}
The first line represents Miner 2 capitulating to Miner 1's chain. The 
second line containing the $\max$ has two cases. In the first case, Miner 1 
capitulates to the optimal point within Miner 2's chain. In the second, they both 
mine and extend their chains with probability $p$ and $(1-p)$ respectively. 
Notice that the first term is $pg_k(a+1,b,c)$ and not $pg_{k-1}(a+1,b,c)$. Since 
$a < b + 1$, if Miner 1 adds a block the deepest level of the blockchain does 
not change, thus $k$ remains the same.

Clearly, Miner 1 eventually will add some block to the chain and the
game will restart at state $(0,0,0)$. Therefore, we expect that the
initial state $(a,b,c)$ has no effect asymptotically. We define the
expected gain per level $\totalgain$ as:
\begin{equation}\label{def:expected_gain}
	\totalgain = \lim_{k\rightarrow \infty} \frac{g_k(a,b,c)}{k}.
\end{equation}
We can decompose $\totalgain$ into two separate quantities:
\begin{equation}\label{eq:decompose_gain}
	\totalgain = q_M + q_{\PF}\cdot \pf,
\end{equation}
where $q_M$ is the expected fraction of blocks he mined and $q_\PF$
the expected fraction of blocks mined and containing pay forward
rewards.
\begin{lemma}\label{lemma:honest_miner_gain}
	If Miner 1 follows $\frontier$, we have $q_M = p$ and
	$q_\PF = p(1-p)$.
\end{lemma}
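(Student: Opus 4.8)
The plan is to analyze the Markov chain on states $(a,b,c)$ that arises when Miner~1 plays $\frontier$, so that Miner~1 always mines his own branch, never pays forward, and capitulates only in the way $\frontier$ prescribes. Under $\frontier$ for Miner~1 and $\frontier(w)$ for Miner~2, the state evolves very simply: starting from $(0,0,0)$, with probability $p$ Miner~1 mines and we move to $(1,0,0)$, which is a winning state $W$, so Miner~2 capitulates and the chain is extended by one level with a block owned by Miner~1 (no pay forward, since $c=0$); with probability $1-p$ Miner~2 mines, extending the longest chain by a block Miner~2 owns, and we move to $(0,0,1)$ — note $c$ becomes $1$ because the last block before the (empty) fork is now Miner~2's block, which carries a pay forward of $w$. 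From $(0,0,1)$, with probability $p$ Miner~1 mines, reaching the winning state $(1,0,1)$: the chain is extended by a Miner~1 block \emph{and} Miner~1 collects the pay forward $w$ that was sitting on the block before the fork; with probability $1-p$ Miner~2 mines again, and we return to $(0,0,1)$ (the new block before the fork is again a Miner~2 block with pay forward $w$).

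First I would make precise that with Miner~1 playing $\frontier$ the reachable states are exactly $\{(0,0,0),(0,0,1)\}$ together with the transient winning states, and write down the two-state chain on $\{(0,0,0),(0,0,1)\}$ indexed by ``what happens at the next level.'' Each level of the longest chain is created either by Miner~1 (probability $p$ at each attempt) or Miner~2 (probability $1-p$); since attempts are i.i.d., the long-run fraction of levels owned by Miner~1 is $p$ by the strong law of large numbers, giving $q_M = p$. This is the easy half.

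For $q_\PF$ I would count the fraction of levels that are Miner~1 blocks \emph{carrying} a pay forward reward. A level owned by Miner~1 carries pay forward precisely when, at the moment Miner~1 mined it, the block immediately before the fork had $c=1$, i.e. when the previous level of the longest chain was owned by Miner~2. Because consecutive level-owners are i.i.d.\ (each level independently Miner~1 w.p.\ $p$, Miner~2 w.p.\ $1-p$), the probability that a given level is ``Miner~1 following Miner~2'' is $p(1-p)$; more carefully, among the first $k$ levels the number of such level-pairs divided by $k$ converges to $p(1-p)$ almost surely (a standard renewal/ergodic argument for a function of two consecutive i.i.d.\ draws). Hence $q_\PF = p(1-p)$. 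I should also double-check the boundary convention at the very first level and confirm it contributes negligibly in the $k\to\infty$ normalization, and confirm that Miner~1 never \emph{pays} forward under $\frontier$ so there is no subtracted term.

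The main obstacle, such as it is, is bookkeeping rather than depth: I must be careful that the transition of $c$ is exactly ``$c=1$ iff the last committed level before the current fork is a Miner~2 block,'' and that when Miner~1 wins a level the pay forward he collects is the one attached to that preceding block — so the event ``Miner~1's level carries pay forward'' is genuinely the event ``Miner~2 owned the preceding level,'' with independence across the i.i.d.\ mining attempts. Once that correspondence is nailed down, both identities $q_M=p$ and $q_\PF=p(1-p)$ follow from the law of large numbers applied to functions of one, respectively two consecutive, Bernoulli$(p)$ trials, and combining via \eqref{eq:decompose_gain} gives $\totalgain = p + p(1-p)w$ as a sanity check.
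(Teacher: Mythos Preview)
Your proposal is correct and follows essentially the same idea as the paper: both arguments reduce to observing that, under $\frontier$, each level of the longest chain is owned by Miner~1 independently with probability $p$, so $q_M=p$, and a Miner~1 block carries a pay-forward reward exactly when the preceding level was mined by Miner~2, giving $q_\PF=p(1-p)$. The paper states this in two sentences, while you expand the same reasoning into an explicit two-state Markov chain and invoke the law of large numbers; the extra bookkeeping is sound but not needed for this lemma.
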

\begin{proof}
	If Miner 1 is honest then the probability that he mined any
	arbitrary block is $p$. That block also contains pay forward rewards
	if the one that immediately preceded it was mined by Miner 2, which
	occurs with probability $p(1-p)$.
\end{proof}

We are ready to state the main theorem of this section.
\begin{theorem}
	For every $p < 0.5$, there exists $w \ge 0$ large enough so that if
	every miner but one follows $\frontier(w)$, the best response of the
	remaining miner with hash power $p$ is $\frontier$.
\end{theorem}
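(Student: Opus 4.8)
The plan is to show that, as $w$ grows, Miner 1's optimal expected gain per level $\totalgain$ under any deviating strategy is bounded above by the gain he obtains by playing $\frontier$, namely $q_M^{\frontier} + q_\PF^{\frontier}\cdot w = p + p(1-p)w$ (Lemma~\ref{lemma:honest_miner_gain}). Writing an arbitrary strategy's gain in the decomposition $\totalgain = q_M + q_\PF\cdot w$ from~\eqref{eq:decompose_gain}, it suffices to prove two things: (i) for \emph{any} strategy of Miner 1, $q_\PF \le p(1-p)$, with equality essentially only when he never capitulates and the block before each fork belongs to Miner 2 with the ``honest'' frequency; and (ii) whenever Miner 1 deviates in a way that strictly increases $q_M$ above $p$ (which is the only reason to deviate), he necessarily incurs a strict loss in the pay-forward term, i.e. $q_\PF$ drops below $p(1-p)$ by some amount bounded away from $0$ uniformly in $w$. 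Given (i) and (ii), choosing $w$ large enough makes the linear-in-$w$ penalty $(p(1-p)-q_\PF)w$ outweigh the bounded gain $q_M - p$, forcing $\frontier$ to be a best response.

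First I would establish (i): a pay-forward reward is collected by Miner 1 only on a block that sits immediately above a Miner-2 block in the final trunk, and each level of the trunk is occupied by Miner 1 with overall frequency exactly $q_M$, so a counting/stationarity argument on the trunk gives $q_\PF \le q_M(1-q_M) \le 1/4$, and more carefully $q_\PF \le p(1-p)$ since $q_M$ itself is controlled (Miner 1 cannot mine a larger fraction of the trunk than the gain analysis of~\cite{blockchain2016} allows for $p<1/2$ — indeed the winning-state dynamics of~\eqref{eq:gain_at_k_immediate} and a gambler's-ruin argument bound how often he overtakes). Second, for (ii), I would argue that any strategy beating $\frontier$ on $q_M$ must capitulate with positive frequency (undercut Miner 2's branch) or win from a lagging position; in the former case the block right after a capitulation is mined by Miner 2 by definition of the capitulation state $(0,s,1)$, so that is a level where Miner 1 forgoes the pay-forward he would have had a chance at, and these lost opportunities occur at a rate proportional to the capitulation rate, which is in turn bounded below in terms of $q_M - p$. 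Formalizing this last link — turning ``deviation helps $q_M$'' into ``capitulation rate is $\Omega(q_M-p)$'' — is the delicate part.

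I expect the main obstacle to be exactly that quantitative coupling in step (ii): showing that the pay-forward deficit $p(1-p) - q_\PF$ is bounded below by a function of $q_M - p$ that does not degenerate, uniformly over all strategies and all $w$. The clean way to handle this is to work with the recursion~\eqref{eq:gain_at_k_immediate} directly and take the $k\to\infty$ limit to get a fixed-point / LP characterization of $\totalgain$ as a function of $w$; then $\totalgain(w)$ is a maximum of finitely (or countably) many affine functions of $w$, hence convex and piecewise linear, and the $\frontier$ value $p + p(1-p)w$ is one of these affine pieces. It remains to rule out that some other affine piece $q_M + q_\PF w$ with $q_M > p$ has slope $q_\PF \ge p(1-p)$; but by (i) every piece has slope at most $p(1-p)$, and the unique piece attaining that slope must have $q_M \le p$ (an honest-like strategy cannot out-mine $\frontier$). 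Therefore for $w$ exceeding the largest crossover point among the finitely many relevant pieces, the $\frontier$ line dominates, which is the claim. The only remaining care is to confirm the set of candidate affine pieces that matter is finite — which follows because a strategy beating $\frontier$ for arbitrarily large $w$ would need slope $\ge p(1-p)$, already excluded, so all competitors are dominated past a finite threshold.
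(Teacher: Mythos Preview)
Your high-level framework---decompose $\totalgain = q_M + q_\PF\cdot w$, show that any non-\frontier{} strategy has strictly smaller slope $q_\PF$, then pick $w$ large---is exactly the paper's framework. The gap is entirely in how you propose to establish the slope bound.

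Your argument for (i) does not go through. The claim $q_\PF \le q_M(1-q_M)$ would follow if the trunk were an i.i.d.\ sequence with Miner~1 density $q_M$, but it is not: Miner~1 blocks arrive in bursts (a whole winning branch at once), and only the \emph{first} block of each such burst can collect a pay-forward. So $q_\PF$ is governed by the rate of \emph{wins}, not by $q_M$ directly, and there is no simple stationarity inequality linking them. Worse, the step ``more carefully $q_\PF \le p(1-p)$ since $q_M$ itself is controlled'' is backwards: for $p$ in the interesting range (say $p\approx 0.45$), deviating strategies \emph{do} achieve $q_M > p$---that is precisely why the problem exists---so you cannot bound $q_\PF$ via a bound on $q_M$. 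Your step (ii), turning ``$q_M>p$'' into ``capitulation rate $\Omega(q_M-p)$'' and then into a pay-forward deficit, is also unsubstantiated; the relationship between capitulation rate and lost pay-forward opportunities is not linear in any obvious way. Finally, the convexity/piecewise-linear picture is correct but does not help: there are infinitely many strategies, hence infinitely many affine pieces, and ruling out a sequence of pieces with slopes approaching $p(1-p)$ from below requires exactly the uniform bound you have not proved. The finiteness argument you give at the end is circular, since it presupposes (i).

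The paper closes this gap by a completely different, concrete route: it case-splits on what the deviating strategy does at the states $(0,1,0)$ and $(0,1,1)$, builds an explicit two- or three-state Markov chain for each case (tracking capitulations and wins), and uses the gambler's-ruin bound $(p/(1-p))^{b-a+1}$ on winning probabilities to produce a closed-form upper bound on $q_\PF$ depending only on $p$. That bound is shown to be strictly below $p(1-p)$ for all $0<p<0.5$ (Figure~\ref{fig:plots}), which is exactly the uniform slope gap you need. This is the idea your proposal is missing.
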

\begin{proof}
	We will show that $q_\PF$ attained by $\frontier$ is at least as
	large as that of any optimal strategy, and in particular it is
	larger for $0 < p < 0.5$.
	\begin{lemma}\label{thm:frontier_more_payforward}
		If $\frontier$ is not the optimal strategy and $0<p<0.5$, we have
		that $p(1-p) > q_\PF$. For $p=0$ or $p=0.5$ we have equality.
	\end{lemma}
	\begin{proof}
		Assuming $\frontier$ is not an optimal strategy, we consider
		different cases, depending on the actions of the optimal strategy
		starting from state $(0,1,c)$.
		
		Clearly, we have that $g_k(0,1,1) \ge g_k(0,1,0)$ for all $k$. To
		see this, consider what happens if we apply exactly the same
		sequence of actions starting from $(0,1,0)$ and from $(0,1,1)$. As
		the only difference between the two states is the initial pay
		forward amount, if we start from $(0,1,1)$ the reward will be the
		higher (in expectation) by some fraction of $w$.  Therefore, if
		the optimal strategy capitulates at $(0,1,1)$ it must also
		capitulate at $(0,1,0)$. This is exemplified in Figure
		\ref{fig:frontier}, where blocks with a red outline represent
		those mined by the optimal strategy, and red arrows show how the
		action of the optimal strategy in one case implies the same action
		in the other.
		
		\begin{figure}[H]
			\centering
			\begin{tikzpicture}[scale=.65, transform shape]
			\tikzstyle{every node} = [rectangle, fill=gray!30, font=\huge]
			\node (x1) at (-3, 0) {1};
			\node (x2) at (-2, 0) {2};
			\node (a) at (0, 0) {1};
			\node[draw=red, thick] (b) at (2, 1.5) {2};
			
			\node (y1) at (4, 0) {1};
			\node (y2) at (5, 0) {2};
			\node (c) at (7, 0) {2};
			\node[draw=red, thick] (d) at (9, 1.5) {2};
			
			\draw[black,->] (x2) -- (x1);
			\draw[black,->,dashed] (a) -- (x2);
			\draw[black,->] (y2) -- (y1);
			\draw[black,<-,dashed] (y2) -- (c);
			\draw[black, dotted, ultra thick] (3, -1) -- (3, 1);	
			\draw[black,<-] (a) -- (b);
			\draw[black,<-] (c) -- (d);
			\draw[red,->, thick] (d)  .. controls($(d)!0.25!(b) + (0,1.5)$) and 
			($(d)!0.75!(b) + 
			(0,1.5)$) ..  (b);
			\end{tikzpicture}
			\caption{Mining like Frontier}
			\label{fig:frontier}
		\end{figure}
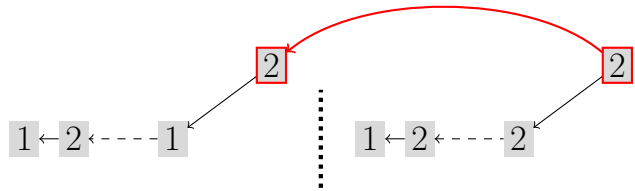
		
		But in this case, the optimal strategy capitulates at exactly the
		same states as $\frontier$, as shown in
		Figure~\ref{fig:frontier}. Since the blockchain starts empty, at
		$(0,0,0)$ this strategy would behave exactly the same as $\frontier$ and
		would therefore not be optimal.
		
		\begin{figure}[H]
			\centering
			\begin{tikzpicture}[scale=.65, transform shape]
			\tikzstyle{every node} = [rectangle,
			font=\huge, fill=gray!30]
			\node (x1) at (-3, 0) {1};
			\node (x2) at (-2,0) {2};
			\node[draw=red, thick] (a) at (0, 0) {1};
			\node (b) at (2, 1.5) {2};
			
			\node (y1) at (4, 0) {1};
			\node (y2) at (5, 0) {2};
			\node[draw=red, thick] (c) at (7, 0) {2};
			\node (d) at (9, 1.5) {2};
			
			\draw[black,<-] (x1) -- (x2);
			\draw[black,<-,dashed] (x2) -- (a);
			\draw[black,<-] (y1) -- (y2);
			\draw[black,<-,dashed] (y2) -- (c);
			\draw[black, dotted, ultra thick] (3,
			-1) -- (3, 1); \draw[black,<-] (a) --
			(b); \draw[black,<-] (c) -- (d);
			\draw[red,->, thick] (a)
			.. controls($(a)!0.25!(c) + (0,-1.5)$)
			and ($(a)!0.75!(c) + (0,-1.5)$) ..
			(c);
			\end{tikzpicture}
			\caption{Case 1}
			\label{fig:case1}
		\end{figure}	
		In Case 1 (shown in Figure~\ref{fig:case1}), we assume that the
		optimal strategy mines at $(0,1,0)$. As before, since $(0,1,1)$ is
		a more beneficial state it would mine from there as well. We can
		therefore assume that the optimal strategy always capitulates to
		states $(0,s,1)$ with $s \ge 1$. Blocks are permanently added to
		the blockchain only after a capitulation.  Given that, in order to
		compute $q_{PF}$ for Miner 1, we construct the Markov chain shown
		in Figure~\ref{fig:markov1}, with states $(0,0,0)$ and $(0,s,1)$
		indicating that Miner 2 or Miner 1 capitulated respectively. Each
		transition is labelled with a pair where the first element is the
		transition probability, whereas the second element is the minimum
		number of blocks added to the chain when the transition occurs.
		\begin{figure}[H]
			\centering
			\begin{tikzpicture}
			\node[state] (win) {(0,0,0)};
			\node[state, right=2.5 of win] (lose) {($0,s,1$)};
			
			\draw[every loop] (win) edge[bend right, auto=right]
			node[below] {$(1-p)\alpha, 1$} (lose) (lose) edge[bend right,
			auto=right] node[above] {$\beta, 2$} (win)
			
			(win) edge[in=200, out=230, loop] node[below]
			{$(1-p)(1-\alpha),2$} (win) (win) edge[in=120, out= 150, loop]
			node[above] {$p,1$} (win)
			
			(lose) edge[loop right] node[right] {$1-\beta,1$} (lose);
			\end{tikzpicture}
			\caption{Case 1 Markov chain}
			\label{fig:markov1}
		\end{figure}
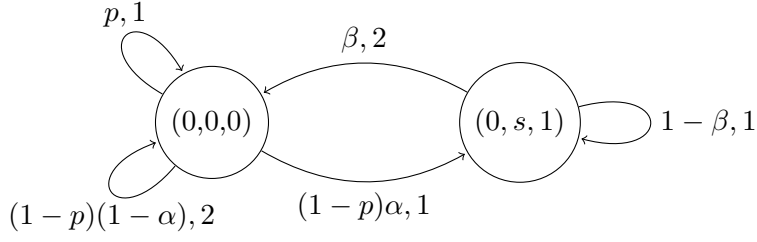
		
		Starting from $(0,0,0)$, with probability $p$ Miner 1 adds a block
		and the game restarts. With probability $1-p$ we move to state
		$(0,1,0)$ and the optimal strategy has probability $\alpha$ of
		losing the race and capitulating to some state $(0,s,1)$. From
		there, the optimal strategy has probability $\beta$ of winning and
		returning to $(0,0,0)$.
		
		As $q_\PF$ is an asymptotic quantity, we are interested in the
		fraction of pay forward rewards per block mined at the stationary
		distribution.  \newcommand{\stat}{\pi} Let $\stat$ be the
		stationary probability of state $(0,s,1)$. Then:
		\begin{equation}
			\stat = \frac{\alpha - \alpha\cdot p}{\alpha - \alpha\cdot p + \beta}
		\end{equation}
		In the actual game, multiple blocks are potentially mined in each
		edge transition, and only the first block might claim a pay
		forward reward. On every transition at least one block is
		permanently added, \emph{except from $(0,s,1)$ to $(0,0,0)$ and
			$(0,0,0)$ to itself through transition $(1-p)(1-\alpha)$}. In
		these cases at least two blocks are added, since Miner 1 must mine
		at least two blocks to be ahead of Miner 2. From $(0,s,1)$ to
		$(0,0,0)$ is the only transition where Miner 1 wins a block
		containing pay forward. Dividing the blocks with pay-forward over
		the total amount mined at the stationary distribution:
		\begin{equation}
			q_\PF \le 
			\frac{\stat \beta}
			{(1-\stat) (2(1-\alpha) (1-p)+\alpha (1-p)+p)+\stat(2 \beta+1-\beta)}
		\end{equation}
		We now need to set the parameters $\alpha$ and $\beta$ optimally
		to maximize the upper bound on $q_\PF$. We take the derivative of
		the upper bound of $q_\PF$ with respect to $\alpha$ and $\beta$ to
		get:
		\begin{equation}
			\frac{dq_\PF}{d\alpha} =
			\frac{\beta^2 (2-p) (1-p)}{(\alpha (p-1)+\beta (p-2))^2}\geq 0
		\end{equation}
		and
		\begin{equation}
			\frac{dq_\PF}{d\beta} = 
			\frac{\alpha^2 (1-p)^2}{(\alpha (p-1)+\beta (p-2))^2}\geq 0.
		\end{equation}
		From \cite[Lemma~1]{blockchain2016}, we know that the probability
		that Miner 1 reaches a winning state starting from $(a,b)$ is at
		most $(p/(1-p))^{(b-a+1)}$. Therefore, the optimal strategy must
		have
		\begin{equation}
			\alpha \le 1 \text{ and } \beta \le \left(\frac{p}{1-p}\right)^{s+1} \le 
			\left(\frac{p}{1-p}\right)^2
		\end{equation}
		Since $q_\PF$ is an increasing function of $\alpha$, $\beta$, the
		upper bound is obtained by setting them to their highest possible
		value, leading to:
		\begin{equation}
			q_\PF \le
			\frac{(1-p) p^2}{1-(1-p) p (3-2p)}.
		\end{equation}
		Compared to the honest outcome $p(1-p)$, we have:
		\begin{equation}
			p(1-p) - \frac{(1-p) p^2}{1-(1-p) p (3-2p)}
			=
			\frac{(1-p) p (1-2 p)}{1-(1-p) p (3-2 p)} \ge 0,
		\end{equation}
		with equality obtained only for $p=0$ and $p=0.5$. We also plot
		this result at the end of the proof.
		
		In the last case, the optimal strategy capitulates at $(0,1,0)$
		but mines at $(0,1,1)$.
		
		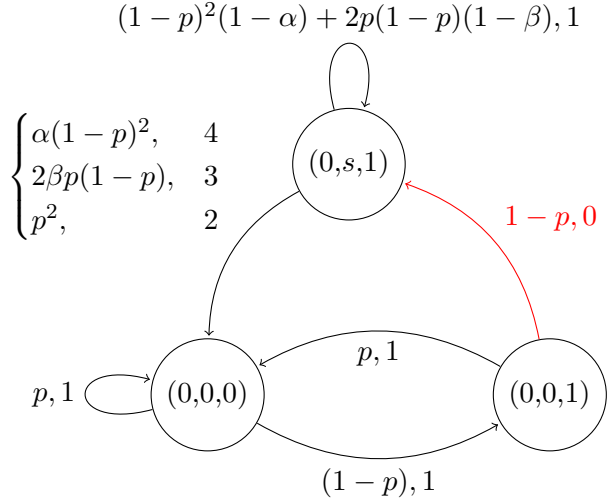
\begin{figure}[H]
			\centering
			\begin{tikzpicture}[scale=1, transform shape]
			\node[state]             (win) {(0,0,0)};
			\node[state, right=3 of win] (inter){(0,0,1)};
			\node[state, above right=2 and 0.8 of win] (lose) {(0,$s$,1)};
			
			\draw[every loop]
			(win) edge[bend right, auto=right]  node {$(1-p),1$} (inter)
			(win) edge[loop left]             node {$p,1$} (win)
			
			(inter) edge[bend right, auto=left] node {$p,1$} (win)
			(inter) edge[bend right, auto=right,red]  node {$1-p,0$} (lose)
			
			(lose) edge[bend right, auto=right] node 
			{
				$\begin{cases}
				\alpha (1-p)^2, & 4\\
				2 \beta p (1-p), & 3\\
				p^2, & 2
				\end{cases}$
			} (win)
			
			(lose) edge[loop above]             
			node {$(1 - p)^2 (1 - \alpha) + 2 p (1 - p) (1 - \beta),1$} (lose);
			\end{tikzpicture}
			\caption{Case 2 Markov chain}
			\label{fig:markov2}
		\end{figure}
		
		The Markov chain in this case (Figure~\ref{fig:markov2}) is
		trickier.  As before each state represents a capitulation:
		$(0,0,0)$ for Miner 2 and $(0,0,1)$, $(0,s,1)$ with $s\ge 1$ for
		Miner 1. However, the transition indicated with the red arrow is
		\emph{not} a capitulation: it is merely a move to a state that
		Miner 1 could also capitulate \emph{to}. To more accurately model
		the transitions of state $(0,s,1)$, we consider multiple moves
		ahead. Specifically, three outcomes could happen before the
		optimal strategy considers capitulating. With probability $p^2$,
		Miner 1 could add two blocks and move to $(0,0,0)$.  Miner 2 could
		also add two blocks and move to $(0,s+2,1)$, with probability
		$(1-p)^2$.  Then, Miner 1 wins with probability $\alpha$. Finally,
		Miner 1 and Miner 2 could both add one block leading to
		$(1,s+1,1)$ with probability $2p(1-p)$. From there, Miner 1 wins
		with probability $\beta$.
		
		Carefully adjusting for the more complicated transitions, the same
		analysis as before holds. For both cases, we plot the difference
		$p(1-p) - q_\PF$ in Figure~\ref{fig:plots}.
		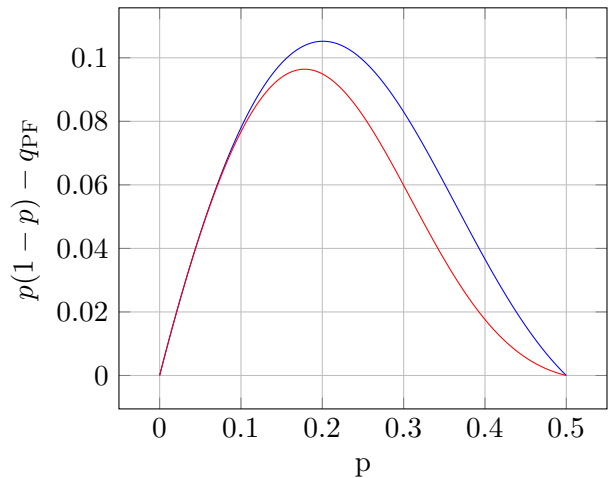
\begin{figure}[H]
			\centering
			\begin{tikzpicture}
			\begin{axis}[
			yticklabel style={/pgf/number format/fixed},
			xlabel = {p},
			ylabel = {$p(1-p)-q_\PF$},
			grid=major,
			]
			\addplot [blue, domain=0.000001:0.5, samples=201]
			{-(((-1 + x)^3 *x* (-1 + 2 *x))/(-1 + (-1 + x) *x* (-3 + 2* x)))};
			\addplot [red, domain=0:0.5, samples=201]
			{
				-(((-1 + x)^3*x*(-1 + 2*x)*(-1 + x + x^2))/(1 + (-1 +
				x)*x*(4 + x*(-4 + x*(-1  
				+ 2*x)))))
			};
			\end{axis}
			\end{tikzpicture}
			\caption[Advantage of $\frontier$ in collecting $q_\PF$ for Case
			1 and Case 2]{Advantage of $\frontier$ in collecting $q_\PF$ for
				Case 1 (blue) and Case 2 (red) }
			\label{fig:plots}
		\end{figure}
	\end{proof}
	Therefore, by (\ref{eq:decompose_gain}) and
	Lemma~\ref{lemma:honest_miner_gain} we have that
	$$
	\totalgain = q_M + q_{\PF}\cdot \pf \le p + (1-p)\cdot w \Rightarrow w \ge 
	\frac{q_M - p}{(1-p) - q_{PF}}
	$$
	and since $(1-p) > q_{PF}$ for $0 < p < 0.5$ there must be some
	value of $w$ large enough to make $\frontier$ the best response.
\end{proof}
Unfortunately, this result is not enough to calculate $w$. Giving crude upper 
bounds on $q_{M}$ or $q_{PF}$ independently is not too hard, we can use  
\cite[Lemma~1]{blockchain2016} from example. However,  because the trade-off  
between the two is hard to establish and the actual blockchain does not use $d 
= \infty$, it is more useful to develop an algorithm that finds the minimum 
$w$ required for a specific $d$, to any degree of accuracy.

\subsection{Calculating the optimal $w$ for finite $d$} \label{sec:LP}
In reality, the values of $w$ needed are not too large. We could
attempt to find the minimum $w$ necessary by computing $g_k(0,0,0)/k$
for large values of $k$ and comparing with $p+p(1-p)w$, but a naive
implementation would take $O(k^3)$ time. To simplify this search we
can define a potential as in \cite{blockchain2016}
\begin{equation}
	\phi(a,b,c) = \lim_{k \rightarrow \infty} g_k(a,b,c) - k\cdot \totalgain, 
\end{equation}
that captures the advantage of Miner 1 at different states. This
quantity is bounded, because irrespective of the initial state
$(a,b,c)$ the game repeats: in particular, when Miner 1 gets ahead of
Miner 2 they both restart at state $(0,0,0)$. Therefore, the gain per
$k$ asymptotically converges to $g^\star$ and $\phi$ is well
defined. Following recurrence \eqref{eq:gain_at_k_immediate}, we have:

\begin{equation}\label{eq:immediate_potential}
	\phi(a,b,c) = 
	\begin{cases}
		\phi(0,0,0) + a + c\cdot w - \totalgain 
		\quad\quad \text{if } a = b + 1\\
		\max
		\begin{cases}
			\max_{s=0,\ldots,b-1}\phi(0,s,1) \\
			p\cdot\phi(a+1,b,c)\\
			\quad + (1-p)\cdot (\phi(a,b+1,c) - \totalgain)
		\end{cases}
	\end{cases},
\end{equation}
where we set $\phi(0,0,0)=0$. Finding a $\phi$ that satisfies these
constraints is even harder. However, if we truncate the game at depth $d$,
it becomes more feasible through linear programming. Notice that
(\ref{eq:immediate_potential}) (which holds for $d=\infty$) does not
explicitly define the value of $\phi(a,b,c)$: the recursion is
unbounded.  By truncating the game at $d$ we limit the available
states to $a,b \le d$.  Specifically, Miner 1 \emph{has} to capitulate
when $b=d$. 

The exact value of $d$ has a small effect on the game. In
particular, Miner 1 has more options as $d$ increases, since he does
not have to capitulate early. Therefore, the minimum $w$ needed for
compliance is a slightly increasing function of $d$. The effect of
this is quite muted however; the probability of having two branches
with length greater than $d$, without Miner 1 winning or capitulating,
decreases exponentially in $d$. In this section we select $d=8$, as it
is large enough to show the trend, but not too large to render the LP
computationally infeasible.

Relaxing the two $\max$es by inequalities, we get the
following LP:
\begin{equation*}
	\begin{array}{l}
		\text{minimize} \quad g + \frac{1}{D}
		\displaystyle\sum\limits_{a=0}^{d} \displaystyle\sum\limits_{b=0}^{d} 
		\displaystyle\sum\limits_{c=0}^{1} \phi(a,b,c)\\
		\text{subject to}\\
		\quad \phi(b+1,b,c) \ge \phi(0,0,0) + b + 1 + c\cdot w - g\\
		\quad\quad \text{ for } b < d, c \le 1\\
		\quad \phi(a,b,c) \ge \phi(0,s,1)\\
		\quad\quad \text{ for } a \le b < d, s < b, c \le 1\\
		\quad \phi(a,b,c) \ge p\cdot\phi(a+1,b,c) + (1-p)(\phi(a,b+1,c) - g)\\
		\quad\quad \text{ for } a \le b < d, c \le 1
	\end{array},
\end{equation*}
where $\phi(a,b,c) \ge 0$ and $D \gg d^2$ is a normalizing factor to
keep the sum of states insignificant compared to $g$. The constraints
are straightforward enough, but the objective is a minimization, which
might appear odd at first. The second term of the sum is only there to
ensure that feasible solutions only contain potential functions $\phi$
that tightly satisfy (\ref{eq:immediate_potential}). Since many states
could be unreachable (for example, if $w$ is high enough so that Miner
1 plays $\frontier$, the only mining states would be $(0,0,0)$ and
$(0,0,1)$), their exact value does not affect $g$ and we need this
'extra' term to keep them in check, just to bias the LP towards
certain feasible solutions. We minimize $g$ to use the following
lemma:
\begin{lemma}\label{lemma:induction}
	For any $g, \phi$ pair satisfying (\ref{eq:immediate_potential}) we 
	have: $$g_k(a,b,c) \le \phi(a,b,c) + k\cdot g.$$
\end{lemma}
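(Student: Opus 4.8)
The plan is to prove the bound by induction on $k$, using a carefully chosen secondary well-ordering on the states to handle the terms of the recurrence \eqref{eq:gain_at_k_immediate} that refer back to $g_k$ itself (rather than $g_{k-1}$). Fix a pair $(g,\phi)$ satisfying \eqref{eq:immediate_potential} — note we only ever use each line of \eqref{eq:immediate_potential} as an inequality "$\phi(\cdot)\ge$ (that option)", so the argument applies verbatim to the relaxed LP constraints as well. The base case $k=0$ is immediate: $g_0\equiv 0$ and $\phi\ge 0$, so $g_0(a,b,c)\le \phi(a,b,c)$. For the inductive step, assume $g_{k-1}(a',b',c')\le \phi(a',b',c')+(k-1)g$ for all states, and prove the same for $g_k$.

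Within fixed $k$, I would split on the shape of $(a,b,c)$. If $a=b+1$ (a winning state), then $g_k(a,b,c)=g_{k-1}(0,0,0)+a+wc$; applying the inductive hypothesis to $g_{k-1}(0,0,0)$ and then the winning-state line of \eqref{eq:immediate_potential}, which gives $\phi(0,0,0)\le \phi(a,b,c)-a-cw+g$, yields exactly $g_k(a,b,c)\le \phi(a,b,c)+kg$. If $a\le b$, then $g_k(a,b,c)$ is a maximum of two options. When the maximizer is a capitulation to some $(0,s,1)$ with $s<b$, I apply the (secondary) inductive hypothesis at $(0,s,1)$ together with the constraint $\phi(a,b,c)\ge\phi(0,s,1)$. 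When the maximizer is the mining term $p\,g_k(a+1,b,c)+(1-p)\,g_{k-1}(a,b+1,c)$, I bound $g_k(a+1,b,c)\le \phi(a+1,b,c)+kg$ (either $(a+1,b,c)$ is a winning state, already handled, or it is smaller in the secondary order) and $g_{k-1}(a,b+1,c)\le \phi(a,b+1,c)+(k-1)g$ by the primary hypothesis; the key arithmetic identity $p\,kg+(1-p)(k-1)g = kg-(1-p)g$ then makes the bound collapse precisely onto the mining line $\phi(a,b,c)\ge p\,\phi(a+1,b,c)+(1-p)(\phi(a,b+1,c)-g)$.

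The one real subtlety — and the step I expect to be the main obstacle to state cleanly — is that for a fixed $k$ the recurrence recurses into $g_k$ both through capitulation $(a,b,c)\mapsto(0,s,1)$ with $s<b$ and through Miner 1 extending $(a,b,c)\mapsto(a+1,b,c)$ (which keeps $k$ fixed because the deepest level does not change while $a<b+1$). To close the induction I would order the non-winning states lexicographically by the pair $(b,\,b-a)$, which takes values in $\mathbb{Z}_{\ge 0}^2$: capitulation strictly decreases the first coordinate, Miner-1-extension keeps the first coordinate and strictly decreases the second (or exits to a winning state, handled separately), and the Miner 2 extension moves to level $k-1$ and is covered by the primary hypothesis. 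Verifying this order is well-founded and that every recursive call in \eqref{eq:gain_at_k_immediate} decreases it (in particular that capitulation chains bottom out at $(0,0,1)$, whose only move is to the winning state $(1,0,1)$) is the crux; once it is in place, each case is a one-line substitution into the corresponding line of \eqref{eq:immediate_potential}.
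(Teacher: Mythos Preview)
Your proposal is correct and essentially identical to the paper's proof: the paper also inducts on $k$ with base case $g_0\equiv 0\le\phi$, and for fixed $k$ uses ``strong induction on $b$ and backwards induction on $a$'', which is exactly your lexicographic order on $(b,\,b-a)$ with winning states as the base. Your remark that only the inequality form of each line of \eqref{eq:immediate_potential} is used (so the argument covers the LP relaxation) is a worthwhile observation the paper leaves implicit.
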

\begin{proof}
	We use induction on $k$. Clearly, for $k=0$ we have
	$g_0(a,b,c) = 0 \le \phi(a,b,c)$.  For fixed $k$ we do strong
	induction on $b$ and backwards induction on $a$.  Starting with
	$b=0$, we begin the induction on $a=b+1$:
	\begin{align*}
		g_k(1,0,c) 
		&= g_{k-1}(0,0,0) + 1 + c\cdot w \\
		&\le \phi(0,0,0) + (k-1)g + 1 + c\cdot w\\
		&= \phi(0,0,0) + 1 + c\cdot w - g + k\cdot g\\
		&= \phi(1,0,c) + k\cdot g
	\end{align*}
	and 
	\begin{align*}
		g_k(0,0,c) 
		&= p\cdot g_k(1,0,c) + (1-p)g_{k-1}(0,1,c) \\
		&\le p(\phi(1,0,c) + k\cdot g) + (1-p)(\phi(0,1,c) + (k-1)g)\\
		&= p\cdot\phi(1,0,c) + (1-p)(\phi(0,1,c) - g) +k\cdot g\\
		&= \phi(0,0,c) + k\cdot g.
	\end{align*}
	For $b>0$ the proof works the same, starting from $g_k(b+1,b,c)$ as the 
	base case for $a$. For $a < b + 1$:
	\begin{align*}
		g_k(&a,b,c)
		= \max 
		\begin{cases}
			\max_{s=0,\ldots,b-1}g_k(0,s,1)\\
			pg_k(a+1,b,c) + (1-p)g_{k-1}(a,b+1,c) 
		\end{cases}\\
		&\le \max 
		\begin{cases}
			\max_{s=0,\ldots,b-1}\phi(0,s,1) + k g\\
			p\cdot(\phi(a+1,b,c)+k g) \\ \quad + (1-p)(\phi(a,b+1,c) + (k-1) g)
		\end{cases}\\
		&= k\cdot g + \max
		\begin{cases}
			\max_{s=0,\ldots,b-1}\phi(0,s,1)\\
			p\cdot \phi(a+1,b,c) \\ \quad + (1-p)(\phi(a,b+1,c) - g)
		\end{cases}\\
		&= \phi(a,b,c) + k g,
	\end{align*}
	where strong induction was used for the capitulation case.
\end{proof}

Using this lemma, we can find the minimum value of $w$ that leads to
\frontier~ being a best response for different values of $p$ by using
binary search on $w$, checking that the $\phi$ produced from the LP
satisfies (\ref{eq:immediate_potential}) and that $g = p +
p(1-p)w$. By the definition of $\totalgain$:
\begin{equation*}
	\totalgain = \lim_{k\rightarrow \infty} \frac{g_k(a,b,c)}{k} \le
	\lim_{k\rightarrow \infty} \frac{\phi(a,b,c) + k\cdot g}{k} 
	= p + p(1-p)w,
\end{equation*}
For $d=8$, we obtain the graph in Figure~\ref{fig:minimum_w1}.
\begin{figure}
	\centering
	\begin{tikzpicture}
	\begin{axis}[
	xlabel=$p$,
	ylabel=Minimum $w$,
	grid=major,
	legend pos= north west]
	\addplot[color=blue,mark=o] coordinates {
		(0.42, 0/1000)
		(0.43, 45/1000)
		(0.44, 132/1000)
		(0.45, 225/1000)
		(0.46, 329/1000)
		(0.47, 433/1000)
		(0.48, 549/1000)
		(0.49, 673/1000)
		(0.5, 797/1000)
	};
	\addlegendentry{$w$}
	
	\addplot[color=red,mark=o] coordinates {
		(0.42, 0/1000)
		(0.43, 28/1000)
		(0.44, 85/1000)
		(0.45, 146/1000)
		(0.46, 212/1000)
		(0.47, 283/1000)
		(0.48, 361/1000)
		(0.49, 444/1000)
		(0.5, 534/1000)
	};
	\addlegendentry{Shared $w'$}
	\end{axis}
	\end{tikzpicture}
	\caption{Minimum $w$ for the immediate release case.}
	\label{fig:minimum_w1}
\end{figure}
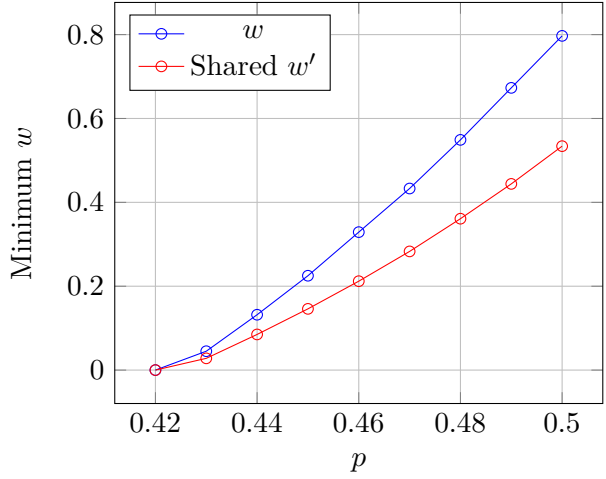

It is not a mistake that $w < 1$ for $p = 0.5$. The reason is that $d$
is finite. For $d = \infty$, the strategic miner with $p \ge 0.5$
never has a reason to capitulate, no matter $w$: the probability that
his branch eventually overtakes the honest one is always quite
high. However, the probability that he will win the race \emph{within
	$d$ steps} is much smaller, leading to this result.

If not all honest miners are required to pay forward exactly $w$, they
can share the costs. In particular, one strategy which also makes
Miner 1 comply with the protocol is for the first honest miner after
Miner 1 to pay forward $w'$ and every other honest miner to pay
forward $2w'$, until Miner 1 adds a block and claims the reward for himself, 
where the process
restarts. As we can see from the Figure~\ref{fig:minimum_w1}, although the total 
amount paid
forward is larger (as $2w' \ge w$), the maximum contribution of any
honest miner is $w' \le w$. We can verify compliance is indeed a best response 
for Miner 1 by using a slightly modified LP.

\subsection{Pure Nash Equilibria for small miners}
So far, we have only considered the best responses of Miner 1, given a
society of honest, identical miners and identified the minimum value
$w$ needed to ensure his compliance. The natural question is to
consider if this also a best response from the other miners. Clearly,
when treated as a group this is not their best response. If they do
not pay forward any amount, Miner 1 would have an average gain of
$g' \ge p$, by mining strategically. By paying forward enough, they
need to \emph{increase} his gain to $g^\star \ge g'$, as he can always
employ strategic mining while they also pay forward. Therefore, it is more costly 
for Miner 2 to pay forward $w$ and ensure compliance than to pay forward 
nothing and allow strategic mining.

We will argue however that under certain conditions paying forward can
be a pure Nash equilibrium, if Miner 2 consists of many \emph{small}
miners that are treated individually.  A miner is considered
\emph{small} if his chance of mining a block is negligible. Of course
at every step some small miner will mine a block with probability
$1-p$.  To see why paying forward is an equilibrium, consider the
options of a small miner at the \emph{exact} moment when he mines a
block: he can either pay forward and guarantee it is included in the
chain or keep everything for himself, but risk getting undercut by
Miner 1. Being a small miner, his payoff from mining another block in the future 
is insignificant compared to the current reward at stake. Formally, the following 
strategy, called
$\strictfrontier(w)$, is part of a PNE:
\begin{definition}[$\strictfrontier$]
	Every miner (except Miner 1) pays forward $w$ and mines at the end
	of the longest chain that contains blocks with pay forward values
	either $w$ or $0$
\end{definition}
Miner 1's strategy is to mine at the frontier without paying
forward, as before. Note, that paying forward some value other than
$w$ or $0$ \emph{cannot} be a best response, since that block will be
ignored by the honest miners, who have the majority of the
computational power\footnote{The reason $\strictfrontier$ accepts pay forward 
of $0$ is just to incentivize Miner 1 to cooperate. Without it, every block would 
have the same pay forward, negating any effects this may have and reducing to 
the classic bitcoin protocol.}

\begin{theorem}
	For $d=8$ and any $p \le 0.44$, there exists $w$ such that Miner 1
	playing $\frontier$ and every other small miner playing
	$\strictfrontier(w)$ is a pure Nash equilibrium.
\end{theorem}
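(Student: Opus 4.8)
The plan is to verify that no single player can profitably deviate from the proposed profile, distinguishing Miner~1 from the small honest miners. Miner~1's side is already done: by the first theorem of this section, for $p \le 0.44 < 0.5$ there exists $w \ge 0$ (computed by the LP of Section~\ref{sec:LP} for $d=8$) such that $\frontier$ is a best response for Miner~1 against everyone else playing $\frontier(w)$, and since $\strictfrontier(w)$ coincides with $\frontier(w)$ on the states actually reached in this equilibrium (the chain only ever carries pay-forward values $0$ or $w$), the same $w$ works here. So the real content is the deviation of a single \emph{small} miner.

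First I would fix the key simplification coming from smallness: a small miner's probability of mining is negligible, so when analysing his deviation we may treat the event ``this small miner mines the current block'' as the only point at which his own action matters, and condition on it. Conditioned on having just mined a block $B$ on top of the current longest chain, his two relevant options are (i) follow $\strictfrontier(w)$, i.e. pay forward $w$ and publish, or (ii) deviate — pay forward something else (or nothing), possibly withhold, possibly mine elsewhere next. I would argue option~(i) guarantees that $B$ is permanently incorporated: since Miner~1 is playing $\frontier$ and all other small miners extend the longest chain carrying pay-forward in $\{0,w\}$, block $B$ sits at the unique deepest node and will be extended, never undercut. Hence under (i) the small miner collects $1$ (coinbase) plus whatever pay-forward was sitting on the block he built on, minus $w$, and this is realised with probability~$1$.

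Next I would bound the payoff of any deviation. Paying forward a value outside $\{0,w\}$ is dominated immediately: by the footnote reasoning, the honest majority ignores such a block, so it is orphaned with probability~$1$, losing the coinbase; this is strictly worse than~(i) for $w$ not too large (and in any case the LP-chosen $w$ is well below $1$, as Figure~\ref{fig:minimum_w1} shows). Paying forward $0$ and publishing exposes $B$ to being undercut by Miner~1 (who is incentivised to capitulate onto Miner~2's chain precisely because the pay-forward rewards do not accumulate — this is the mechanism driving the whole section); the gain from saving the $w$ is at most $w$, while the expected loss from a positive probability of losing the coinbase is bounded below by a constant independent of the miner's size. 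The one subtlety is withholding: a small miner could try to hide $B$ and release later. But withholding only pays off if the miner expects to mine again and build a private lead — an event of negligible probability for a small miner — so the expected gain from withholding is $o(1)$ in his hash power, whereas the immediate stake $1 - w + (\text{pay-forward on parent})$ is $\Theta(1)$. Comparing, for the LP-chosen $w$ at $d=8$ and $p \le 0.44$, option~(i) strictly dominates every deviation, which is exactly the PNE claim.

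The main obstacle I anticipate is making the ``small miner'' argument fully rigorous: one must formalise the limit in which individual hash powers tend to $0$ while their sum stays $1-p$, show that the conditional-on-mining payoff comparison is uniform in the (vanishing) size, and confirm that a deviating small miner cannot chain together many future small-probability events into a non-negligible advantage. I would handle this by an explicit $\varepsilon$-bound: if the deviator has hash power $\varepsilon$, then the total expected value he can extract from all his \emph{future} blocks and from any withholding scheme is $O(\varepsilon)$ times a bounded constant, so for $\varepsilon$ small enough the current-block comparison dominates; the threshold $p \le 0.44$ and $d=8$ enter only through ensuring the gap in that current-block comparison is a fixed positive number, which is what the modified LP certifies. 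The remaining cases — deviating by mining at a non-frontier block, or by a mixed/adaptive strategy — reduce to the above since any such block is either off the longest chain (orphaned) or not deeper than $B$ (no advantage), so it suffices to check pure single-block deviations.
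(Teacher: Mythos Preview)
Your high-level decomposition (check Miner~1, then check a single small miner) is the right shape, and you correctly identify that the threat to a deviating small miner is being orphaned by Miner~1. But there is a genuine gap at precisely the point where the paper does its work. You assert that paying forward $0$ ``exposes $B$ to being undercut by Miner~1,'' yet if Miner~1's strategy were literally $\frontier$ (always mine the deepest node), he would \emph{not} undercut: he would build on $B$, the small miner would pocket the saved $w$, and the deviation would be profitable. What makes the equilibrium hold is that Miner~1's equilibrium strategy is his full best response to $\strictfrontier(w)$, which agrees with $\frontier$ on path but forks behind a block that carries no pay-forward. The paper extracts this off-path behaviour by modifying the LP of Section~\ref{sec:LP} to include the small-miner deviation and then reading off which constraints are tight: for $p\le 0.44$ the states $(0,1,c),(1,1,c),(1,2,c),(2,2,c)$ become mining states for Miner~1 and $(0,3,c),(1,3,c),(2,3,c)$ capitulation states. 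Your parenthetical explanation --- that Miner~1 ``capitulates onto Miner~2's chain because the pay-forward rewards do not accumulate'' --- also has the mechanism inverted; here Miner~1 is \emph{attacking}, not capitulating, exactly because the deviant block offers him no pay-forward.

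Second, even granting the fork, you never do the quantitative comparison that pins down the threshold $0.44$. From the mining/capitulation states above, the paper reads off that Miner~1's fork succeeds with probability at least $p^2+(1-p)p^3$, so the deviating small miner's expected payoff is at most $(1+w)\bigl(1-p^2-(1-p)p^3\bigr)$, versus at least $1-w$ if he pays forward; one then plugs in the $(p,w)$ pairs from Figure~\ref{fig:minimum_w1} and checks the inequality holds up to $p=0.44$. Your ``bounded below by a constant independent of the miner's size'' is the right kind of statement, but without the explicit orphaning probability there is nothing to compare against $w$ and no reason the cutoff should be $0.44$ rather than anything else. The $O(\varepsilon)$ argument you give for ignoring future blocks of the small miner is fine and matches the paper's informal use of smallness; it is just not where the difficulty lies.
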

\begin{proof}[Proof sketch]
	From the perspective of the small miner, he has two opportunities
	\emph{not} to pay forward: after another small miner or after Miner
	1. We can modify the LP from Section~\ref{sec:LP} to contain both
	these deviations by the small miner. Examining the tightness of the
	constraints, we can identify the mining states of Miner
	1. Specifically, for $p \le 0.44$, we observe that in both cases,
	after the small miner adds his block without paying forward, Miner 1
	immediately forks the chain behind the small miner and starts mining
	in parallel. In particular, states $(0,1,c),(1,1,c),(1,2,c)$ and
	$(2,2,c)$ become mining states and $(0,3,c),(1,3,c),(2,3,c)$
	capitulation states, for $c \in \sset{0,1}$. Essentially he mines
	while one step behind and capitulates if the honest miners add 2
	more blocks. Overall, he overtakes the honest branch with
	probability at least $p^2 + (1-p)p^3$.
	
	Therefore, the expected payoff of the small miner is at most
	$(1+w)(1-p^2 - (1-p)p^3)$, compared to at least $1 - w$ which would have 
	been
	his payoff had he paid forward $w$ (since in this case he would always keep 
	his mined block). Plugging in the $p,w$ pairs
	found in Figure~\ref{fig:minimum_w1} we verify that
	$\strictfrontier(w)$ is indeed a best response.
\end{proof}
It would be very interesting to see if there exists a strategy profile that is a PNE 
where Miner 1 plays $\frontier$ for $0.45 < p < 0.5$. The analysis seems quite 
challenging and would likely require different techniques to analyze the emerging 
stochastic game. Ideally, every small miner comprising Miner 1 would use the 
same $w$, but it could be the case that better bounds could be obtained if the 
pay forward amount use depends on the pay forward received from the previous 
block.

\section{Strategic Release}\label{sec:strategic_release}
\newcommand{\strategicgain}{\hat{g}^{\star}}
\newcommand{\honestgain}{\hat{g}^{\star}} Similarly to the immediate
release case, we identify the maximum hash power $p$ such that if all
miners but one follow $\frontier(w)$, the remaining miner's best
response is $\frontier$ if his hash power is less than $p$. As before,
it is enough to consider a two player game where Miner 1 is the
deviant miner with hash power $p < 0.5$ and Miner 2 represents all
small miners, as they follow the same strategy.

The blockchain retains its structure: it still contains a long trunk
of blocks followed by two branches, corresponding to the released
blocks of Miner 1 and Miner 2. Since Miner 1 could also have more
unreleased blocks in his branch the state is now a quadruple
$(a_r, a, b,c)$ where $a$ is the number of blocks in Miner 1's branch,
of which $a_r$ have been released, and $b$ is the number of blocks in
Miner 2's branch. Also, contrary to the immediate release case we can
have $a > b + 1$.

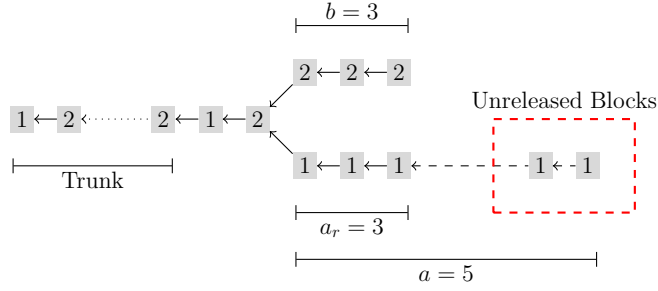
\begin{figure}
	\begin{center}
		\begin{tikzpicture}[scale=.62, transform shape]
		
		\tikzstyle{every node} = [rectangle, fill=gray!30,font=\Large]
		\node (t1) at (-3, 0) {1};
		\node (t2) at (-2, 0) {2};
		\node (t3) at (0, 0) {2};
		\node (t4) at (1, 0) {1};
		\node (t5) at (2, 0) {2};
		
		\node (b1) at (3, 1) {2};
		\node (b2) at (4, 1) {2};    
		\node (b3) at (5, 1) {2};    
		
		\node (ar1) at (3, -1) {1};        
		\node (ar2) at (4, -1) {1};    
		\node (ar3) at (5, -1) {1};
		\node (a4) at (8, -1) {1}; 
		\node (a5) at (9, -1) {1}; 
		
		\draw[black,<-] (t1) -- (t2);
		\draw[black,<-,dotted] (t2) -- (t3);
		\draw[black,<-] (t3) -- (t4);
		\draw[black,<-] (t4) -- (t5);
		
		\draw[black,<-] (t5) -- (b1);
		\draw[black,<-] (b1) -- (b2);
		\draw[black,<-] (b2) -- (b3);
		
		\draw[black,<-] (t5) -- (ar1);
		\draw[black,<-] (ar1) -- (ar2);
		\draw[black,<-] (ar2) -- (ar3);
		\draw[black,<-,dashed] (ar3) -- (a4);
		\draw[black,<-,dashed] (a4) -- (a5);
		
		\draw[dashed,thick,red] (7,0) rectangle (10,-2);
		\node [above,fill=white] at ($1/2*(7,.1) + 1/2*(10,.1)$) {Unreleased Blocks};
		
		\draw[black,|-|] (-3.2,-1) --node[draw=none,fill=none,below]{Trunk} (0.2,-1);
		
		\draw[black,|-|] (2.8,-2) --node[draw=none,fill=none,below]{$a_r=3$} (5.2,-2);
		
		\draw[black,|-|] (2.8,-3) --node[draw=none,fill=none,below]{$a=5$} (9.2,-3);
		
		\draw[black,|-|] (2.8,2) --node[draw=none,fill=none,above]{$b=3$} (5.2,2);
		\end{tikzpicture}
	\end{center}
	\caption[Tree representing state $(3,5,3,1)$]{This tree represents
		state $(3,5,3,1)$. Miner 1 has already mined 5 blocks ahead of the
		fork, but has only released 3 of them. Miner 2 knows this because
		the game is complete information. However, he cannot mine there
		until Miner 1 releases his blocks. Miner 1 can strategically
		release blocks right when Miner 2's branch is about to become the
		main one, thus wasting Miner 1's computational power.}
\end{figure}

Since Miner 2 follows the $\frontier(w)$ strategy, he will capitulate
if $a_r \ge b + 1$ and continue mining his branch for $a_r < b +
1$. Therefore, without loss of generality we can assume that at state
$(a_r,a,b,c)$ if $a < b + 1$ then $a_r = a$ and if $a \ge b + 1$ then
$a_r = b$, otherwise Miner 2 would immediately capitulate and the game
would continue at state $(0, a - a_r, 0,0)$. Therefore, we can encode
the states of the mining game with strategic release by the triplet
$(a,b,c)$, where $a_r = \min(a,b)$.

As before, we need to recursively define the gain of Miner 1 after the
longest branch is extended by $k$ levels. When $a < b + 1$, the
actions of Miner 1 are exactly the same as in the immediate release
case: he can either capitulate to $(0,s,1)$ or mine. However, for
$a \ge b + 1$ he can release one more block to force Miner 2 to
capitulate and continue at state $(a-b-1,0,0)$.
\begin{equation}\label{eq:gain_at_strategic}
	\hat{g}_k(a,b,c) = 
	\max
	\begin{cases}
		k + b + c\cdot w \quad\quad \text{ if } a \ge k + b\\
		\max
		\begin{cases}
			\max_{s=0,\ldots,b-1}\hat{g}_k(0,s,1)\\
			p\hat{g}_k(a+1,b,c) \\ \quad + (1-p)\hat{g}_{k-1}(a,b+1,c)\\ 
			\hat{g}_{k-1}(a-b-1,0,0) \\ \quad + b + 1 + c\cdot w
		\end{cases}
	\end{cases}
\end{equation}
where the last term applies for $a \ge b+1$. Equivalently, we can
define $g_k(a,0,c) = -\infty$ for all $k$ and $a < 0$. The first term
is necessary, as without it the recurrence is ill-defined and Miner 1
can keep mining forever. Once he reached the horizon $k$ he can safely
release all his blocks. We define the expected gain per level
$\strategicgain$ and potential $\hat{\phi}$ as we did in
(\ref{def:expected_gain}) for the immediate release.  Note that the
case where $a=k+b$ does not appear, since $\phi$ is the asymptotic
advantage as $k\rightarrow\infty$ and $a,b$ are bounded.
\begin{equation}\label{eqn:strategic_potential}
	\hat{\phi}(a,b,c) = 
	\max
	\begin{cases}
		&\max_{s=0,\ldots,b-1}\hat{\phi}(0,s,1)\\
		&p\hat{\phi}(a+1,b,c) + (1-p)(\hat{\phi}(a,b+1,c)-\strategicgain)\\ 
		&\hat{\phi}(a-b-1,0,0) + b + 1 + c\cdot w - \strategicgain
	\end{cases}.
\end{equation}
Releasing blocks only causes Miner 2 to capitulate if $a \ge b + 1$,
so we set $\phi(a,0,c) = -\infty$ for $a < 0$. Before moving on, we
need a useful inequality of the \emph{immediate release} potential
$\phi$, to use for bounding the advantage of some states relative to
others.
\begin{lemma}\label{lemma:potential_upper_bound}
	For nonnegative integers $a,b$ and $\ell \in \sset{0,1}$ we have that:
	\begin{equation}
		\phi(a+\ell,b+\ell,1) \le \phi(a,b,0) + 
		(\ell+w)\cdot\left(\frac{p}{1-p}\right)^{b-a+1}.
	\end{equation}
\end{lemma}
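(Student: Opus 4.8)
The plan is to bound the left-hand side by exhibiting, for the game starting at state $(a+\ell, b+\ell, 1)$, a strategy whose expected gain is no larger than $g_k(a,b,0) + (\ell+w)\cdot(p/(1-p))^{b-a+1}$ up to lower-order terms, and then pass to the limit using the definition of $\phi$. Concretely, I would couple two runs of the immediate-release game: one from $(a,b,0)$ playing the optimal strategy, and one from $(a+\ell,b+\ell,1)$ that \emph{mimics} the same sequence of actions (mine/capitulate choices) of the optimal run, block for block. The $\ell$ extra blocks already present in both Miner~1's and Miner~2's branches keep the relative positions identical, so at every step the mimicking run is in a state $(a'+\ell, b'+\ell, 1)$ exactly when the reference run is in $(a',b',0)$, and the capitulation options available differ only in the trailing levels that are common to both branches. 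Hence the mimicked run is a \emph{valid} strategy for the game at $(a+\ell,b+\ell,1)$.

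Next I would compare the two payoff streams. As long as neither run capitulates to a point before the fork and Miner~1 does not overtake, the two runs accrue identical coinbase rewards, and the mimicking run additionally holds the pending pay-forward $w$ on the block before the fork — but that $w$ is only \emph{realized} by Miner~1 if he eventually wins the race from his current deficit, which by \cite[Lemma~1]{blockchain2016} happens with probability at most $(p/(1-p))^{(b+\ell)-(a+\ell)+1} = (p/(1-p))^{b-a+1}$. When Miner~1 does overtake, the mimicking run collects $\ell$ extra coinbase units from the $\ell$ longer prefix of his branch (plus the $w$), again only on that same low-probability event; when Miner~2 capitulates first, or Miner~1 capitulates, the $\ell$-shift is reabsorbed into the trunk and the continuation reduces to the same subgame in both runs (the $(0,0,0)$ restart, up to the shift). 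So the total expected surplus of the mimicking run over the reference run is at most $(\ell + w)\cdot(p/(1-p))^{b-a+1}$, which gives $g_k(a+\ell,b+\ell,1) \le g_k(a,b,0) + (\ell+w)(p/(1-p))^{b-a+1}$ for every $k$ (optimality of the reference run on the left is not needed — we only need the mimicking run to be \emph{feasible}, and it underperforms or matches the true optimum at $(a+\ell,b+\ell,1)$; actually we want the inequality the other way, so I should instead start the coupling from the optimal run at $(a+\ell,b+\ell,1)$ and project it down to a feasible run at $(a,b,0)$, losing at most the $(\ell+w)$-weighted overtaking probability). Subtracting $k\cdot\totalgain$ and letting $k\to\infty$ yields the claimed bound on $\phi$.

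The direction of the coupling is the first thing to get right: I want to take the \emph{optimal} play at $(a+\ell,b+\ell,1)$ and show its value is at most the value of a legal play at $(a,b,0)$ plus the error term. So the map should go \emph{downward} — strip the $\ell$ common blocks and the pay-forward marker, interpret every action of the upper run as an action of the lower run, and argue that the only place the upper run can do strictly better is in the event that Miner~1's current branch eventually becomes the longest, whose probability is controlled by \cite[Lemma~1]{blockchain2016}. The main obstacle I anticipate is handling the capitulation moves cleanly: when the optimal upper run capitulates to some $(0,s,1)$ with $s$ in the range where the two branches differ (i.e.\ $s$ sits among the trailing $\ell$ shared levels), the corresponding lower-run target may be $(0,s,0)$ or even $(0,s-\ell,\cdot)$, and I must check that this substitution never helps the upper run by more than the stated slack — this is where the hypothesis $\ell\in\{0,1\}$ is doing work, since it keeps the shift small enough that at most one pay-forward-bearing level and one coinbase level are ever in play. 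Once that bookkeeping is pinned down, everything else is the routine limit argument and an appeal to the overtaking-probability bound already available from \cite{blockchain2016}.
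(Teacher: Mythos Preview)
Your core idea---transplant the optimal strategy from the larger state to the smaller one, observe that the only surplus is collected on the event ``Miner~1 eventually overtakes from the current deficit,'' and bound that event's probability by $(p/(1-p))^{b-a+1}$ via \cite[Lemma~1]{blockchain2016}---is exactly what the paper does. After your self-correction on the direction of the coupling, you are on the right track.

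Where the paper is cleaner is in \emph{decomposing} the bound into two independent steps rather than handling the $\ell$-shift and the $c$-flip simultaneously. First it shows, for any fixed $c$,
\[
g_k(a+\ell,b+\ell,c)\;\le\;g_k(a,b,c)+\ell\cdot\Bigl(\tfrac{p}{1-p}\Bigr)^{b-a+1},
\]
by running the $(a+\ell,b+\ell,c)$-optimal strategy at $(a,b,c)$ and noting that the two coupled runs differ by exactly $\ell$ on the event of a win. Second, and separately, it observes the \emph{exact} identity
\[
g_k(a,b,1)\;=\;g_k(a,b,0)+w\cdot r(a,b),
\]
since the optimal play from $(a,b,1)$ and $(a,b,0)$ is the same sequence of actions and the extra $w$ is cashed if and only if the race is won. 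Chaining the two and taking $k\to\infty$ gives the lemma. This decomposition removes the capitulation bookkeeping you flag as the ``main obstacle'': for the $c$-flip there is no state mismatch at all, and for the $\ell$-shift the paper simply notes that ``winning or capitulating depends only on the difference $b-a$,'' so the transplanted strategy is well-defined.

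One correction: your belief that the hypothesis $\ell\in\{0,1\}$ is ``doing work'' in the capitulation analysis is misplaced. The argument goes through for any nonnegative integer $\ell$; the restriction to $\{0,1\}$ in the statement is only because those are the two instances actually invoked later (in Lemma~\ref{lemma:satisfies_req}).
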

\begin{proof}
	We first need to establish that for $c \in \sset{0,1}$
	\begin{equation}\label{ineq:upper_bound_generic}
		g_k(a+\ell,b+\ell,c) \le g_k(a,b,c) +
		\ell\cdot\left(\frac{p}{1-p}\right)^{b-a+1} 
	\end{equation}
	and
	\begin{equation}\label{ineq:upper_bound_pf}
		g_k(a+\ell,b+\ell,1) \le g_k(a,b,0) +
		(\ell+w)\cdot\left(\frac{p}{1-p}\right)^{b-a+1}. 
	\end{equation}
	Suppose that from state $(a,b)$ Miner 1 follows the same strategy as
	state $(a+\ell,b+\ell)$. This is possible, as winning or
	capitulating depends only on the difference $b-a$. Let
	$\bar{g_k}(a,b,c)$ be the gain from playing this suboptimal strategy
	and $\bar{r}(a,b)$ the probability of winning from state
	$(a,b)$. Clearly, we have:
	\begin{equation}
		g_k(a,b,c) \ge \bar{g}_k(a,b,c) =  g_k(a+\ell,b+\ell,c) -
		\ell\cdot\bar{r}(a,b). 
	\end{equation}
	Given that $\bar{r}(a,b) \le (p/(1-p))^{b-a+1}$ from
	\cite[Lemma~1]{blockchain2016} we get the first inequality.
	
	For the second, let $r(a,b)$ be the probability of winning from
	$(a,b)$ with the optimal strategy. As before:
	\begin{equation}
		g_k(a,b,1) = g_k(a,b,0) + w\cdot r(a,b) \le g_k(a,b,0) + w\cdot 
		\left(\frac{p}{1-p}\right)^{b-a+1},
	\end{equation}
	which we substitute in \eqref{ineq:upper_bound_generic}.
	
	To complete the proof, we take limits in
	(\ref{ineq:upper_bound_generic}) and use the definition of $\phi$.
\end{proof}

In \cite{blockchain2016} it was shown that for $p \le 0.361$
$\frontier$ is a best response, therefore (for $w=0$) there exists a
$\phi$ such that for $\totalgain = p$ (\ref{eq:immediate_potential})
is satisfied, meaning that honest mining maximizes the fraction of
blocks added by Miner 1. By Lemma~\ref{thm:frontier_more_payforward},
increasing $w$ can only strengthen this result, therefore for
$p \le 0.361$ and any $w \le 1$ there exists $\phi$ that satisfies
(\ref{eq:immediate_potential}) for $\totalgain = p + p(1-p)w$. Using
this, we extend the potential to states $(a,b,c)$ with $a > b + 1$:
\begin{equation}\label{eqn:strategic_potential_adjusted}
	\bar{\phi}(a,b,c) = 
	\begin{cases}
		\phi(a,b,c) & \text{if } a \le b + 1\\
		a \lambda + b \mu + \kappa + c \cdot w & \text{ otherwise}\\
	\end{cases},
\end{equation}
where $\lambda = \frac{(p-1)^2 (1-p w)}{1-2 p}$,
$\mu = \frac{p \left(p-(p-1)^2 w\right)}{2 p-1}$ and
$\kappa = \frac{(p-1) p (p w-1)}{2 p-1}$. The constants have been
selected so that
$\bar{\phi}(a,b,c) = p \bar{\phi}(a+1,b,c) + (1-p)(\bar{\phi}(a,b+1,c)
- \honestgain)$ and
$\bar{\phi}(b+1,b,c) = b + 1 + c\cdot w - \honestgain =
\phi(b+1,b,c)$, for $\honestgain = p + p(1-p)w$ which is the honest
gain. Doing this, we can use known results about $\phi$ for smaller
states while having a tight enough closed form of $\hat{\phi}$ on
states which are unlikely to be reached.

Following the notation of \cite{blockchain2016} we define 
$\bar{\phi}_M$ for when Miner 1 continues to mine, $\bar{\phi}_R$ when Miner 1 
releases some blocks and $\bar{\phi}_C$ when he capitulates.
\begin{align*}
	\bar{\phi}_M(a,b,c)
	&= p\cdot \bar{\phi}(a+1,b,c) + (1-p)(\bar{\phi}(a,b+1,c) - 
	\honestgain)\\
	\bar{\phi}_R(a,b,c)
	&= \bar{\phi}(a-b-1,0,0) + b + 1 + c\cdot w - \honestgain\\
	\bar{\phi}_C(a,b,c) &= \max_{s=0,\ldots,b-1} \bar{\phi}(0,s,1).
\end{align*}

\begin{theorem}
	For every $p < 0.344$, there exists $w \ge 0$ large enough so that
	if every miner but one follows $\frontier(w)$, the best response of
	the remaining miner with hash power $p$ is $\frontier$.
\end{theorem}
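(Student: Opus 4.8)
The plan is to mirror the immediate-release argument: exhibit a potential $\hat\phi$ together with the value $\hat g=p+p(1-p)w$ --- which, exactly as before, is the per-level gain $\frontier$ itself achieves --- and verify that this pair satisfies the \emph{relaxed} (inequality) form of \eqref{eqn:strategic_potential}. The candidate is the function $\bar\phi$ of \eqref{eqn:strategic_potential_adjusted}: it equals the immediate-release potential $\phi$ on the states $a\le b+1$ and the explicit affine function $a\lambda+b\mu+\kappa+cw$ on the ``Miner~1 is ahead'' states $a>b+1$. First I would record the strategic analogue of Lemma~\ref{lemma:induction}: if $(\hat g,\hat\phi)$ with $\hat\phi\ge0$ and $\hat\phi(0,0,0)=0$ satisfies $\hat\phi\ge\hat\phi_M$, $\hat\phi\ge\hat\phi_R$ and $\hat\phi\ge\hat\phi_C$ at every legal state, then $\hat g_k(a,b,c)\le\hat\phi(a,b,c)+k\hat g$ for all $k$, so letting $k\to\infty$ at $(0,0,0)$ gives $\strategicgain\le\hat g=p+p(1-p)w$, i.e.\ $\frontier$ is a best response. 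Its proof is the same nested induction as Lemma~\ref{lemma:induction} with one extra case for the release move; this is routine. Since $p<0.344$ gives $p\le0.361$, the remark preceding \eqref{eqn:strategic_potential_adjusted} supplies the required $\phi$ satisfying \eqref{eq:immediate_potential} with $\totalgain=\hat g$, and (as $\frontier$ is then immediate-release optimal) one has $\phi(b+1,b,c)=b+1+cw-\hat g$ and $\phi(0,s,1)=pw$ for all $s$.

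It then remains to check the three families of inequalities for $\bar\phi$. By the identities forced on the constants ($\lambda+\mu=1$ and $\lambda+\kappa=(1-p)(1-pw)=1-\hat g$), the two pieces of $\bar\phi$ glue smoothly along $a=b+1$, the mining inequality is an \emph{equality} throughout $a>b+1$, and on $a\le b$ it is literally the mining line of \eqref{eq:immediate_potential} (mining keeps the first coordinate $\le b+1$). The release inequality is an equality at $a=b+1$, and for $a>b+1$ it collapses --- after substituting the affine form and the value of $\bar\phi(a-b-1,0,0)$ --- to $\lambda+\hat g\ge1$ and $2\lambda+\kappa+2\hat g\ge2$, which both reduce to $p\ge0$. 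The capitulation inequality is the capitulation line of \eqref{eq:immediate_potential} for $a\le b+1$, and for $a>b+1$ follows from $\bar\phi(a,b,c)\ge b+2\lambda+\kappa+cw\ge pw$ (valid while $pw\le1$); nonnegativity $\bar\phi\ge0$ is immediate on each piece. The \emph{only} genuinely new inequality is thus the mining inequality at the gluing states $(b+1,b,c)$, from which Miner~1's ``keep mining'' move crosses into the hidden-branch region: one must show $b+1+cw-\hat g\ge p\,\bar\phi(b+2,b,c)+(1-p)\bigl(\phi(b+1,b+1,c)-\hat g\bigr)$, whose worst instance is $b=0,\ c=0$, i.e.\ the state $(1,0,0)$ --- Miner~1 must prefer releasing his one-block lead to extending a hidden branch.

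The obstacle --- and where the threshold $0.344$ is born --- is the term $\phi(b+1,b+1,c)$: it is the immediate-release optimal potential at an off-equilibrium ``tied'' state, with no clean closed form. I would bound it from above using the shift inequality \eqref{ineq:upper_bound_generic} of Lemma~\ref{lemma:potential_upper_bound}, which gives $\phi(b+1,b+1,c)\le\phi(0,0,c)+(b+1)\,p/(1-p)$; plugged into the $(1,0,0)$ inequality this reduces the requirement, after simplifying with $\hat g=1-(1-p)(1-pw)$, to a condition of the form $(1-2p)^2\ge p(1-pw)(1-p)^2$. Since larger $w$ relaxes this, the argument then amounts to choosing $w$ as large as the feasibility window permits --- $\frontier$ must remain immediate-release optimal and the constants $\lambda,\mu,\kappa$ must keep the right signs so that $\bar\phi$ is genuinely increasing in the hidden-branch length, which keeps $w$ below roughly $1/(1-p)$ --- and the largest $p$ for which some such $w$ makes the condition hold is $\approx0.344$. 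This single boundary estimate is the heart of the proof; it is exactly where the carefully chosen closed-form extension of $\phi$ to $a>b+1$ does its work, and where the gap between the immediate-release threshold $0.5$ and the strategic threshold $0.344$ comes from, reflecting the extra leverage a deviant miner gains from withholding a long private branch. Making the final balancing fully rigorous may, as in Section~\ref{sec:LP}, be cleanest via a finite computation rather than closed-form manipulation.
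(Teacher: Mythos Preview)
Your approach is correct and essentially identical to the paper's: exhibit $\bar\phi$, verify the three inequalities $\bar\phi\ge\bar\phi_M,\bar\phi_R,\bar\phi_C$ by cases, and identify the only nontrivial case as the mining inequality at the gluing states $(b+1,b,c)$, handled by bounding $\phi(b+1,b+1,c)$ via Lemma~\ref{lemma:potential_upper_bound}. Two small clarifications: the paper takes the feasibility constraint to be $w\le 1$ (not $w\lesssim 1/(1-p)$), and your derived condition $(1-2p)^2\ge p(1-p)^2(1-pw)$ at $w=1$ is exactly the quartic $p^4-3p^3+7p^2-5p+1\ge 0$ whose relevant root is $p\approx 0.344$, so the final step is closed-form and needs no numerical search.
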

\begin{proof}
	We need to show that $\bar{\phi}$ is a valid potential and satisfies 
	(\ref{eqn:strategic_potential}) for $\strategicgain = p + p(1-p)w$, or 
	equivalently:
	\begin{equation*}
		\bar{\phi}(a,b,c) = \max\{\bar{\phi}_M(a,b,c), \bar{\phi}_R(a,b,c), 
		\bar{\phi}_C(a,b,c)\}
	\end{equation*}

	\begin{lemma}\label{lemma:satisfies_req}
		The potential $\bar{\phi}$ and $\strategicgain = p + p(1-p)w$
		satisfy recurrence (\ref{eqn:strategic_potential}) for
		$p \le 0.344$ (root of the polynomial
		$- p^4 + 3p^3 - 7 p^2 + 5p-1$) and some $w < 1$.
	\end{lemma}
	\begin{proof}
		\begin{claim}
			For states $(a,b,c)$ with $a < b + 1$:
			\begin{equation*}
				\bar{\phi}(a,b,c) = \phi(a,b,c) = \max\{\bar{\phi}_M(a,b,c), 
				\bar{\phi}_R(a,b,c), \bar{\phi}_C(a,b,c)\}.
			\end{equation*}
		\end{claim}
		In this case $\bar{\phi}_R(a,b,c) = -\infty$ and therefore
		$\bar{\phi}(a,b,c) = \phi(a,b,c)$ which satisfies
		(\ref{eq:immediate_potential}) and (\ref{eqn:strategic_potential})
		by definition, as releasing is only possible for $a \ge b + 1$.
		\begin{claim}
			For states $(a,b,c)$ with $a > b + 1$:
			\begin{equation*}
				\bar{\phi}(a,b,c) = \bar{\phi}_M(a,b,c) = \max\{\bar{\phi}_M(a,b,c), 
				\bar{\phi}_R(a,b,c), \bar{\phi}_C(a,b,c)\}.
			\end{equation*}
		\end{claim}
		By definition, $\bar{\phi}(a,b,c) = \bar{\phi}_M(a,b,c)$. We have:
		\begin{equation*}
			\bar{\phi}(a,b,c) - \bar{\phi}_R(a,b,c) = 
			\frac{b (2-4p)+(1-p) (2-3 p) (1-p w)}{1-2 p} > 0. 
		\end{equation*}
		Since $p \le 0.344$ we know that $\phi(a,b,c)$ corresponds to the
		potential of the honest mining strategy, in the immediate release
		case. Therefore:
		\begin{align*}
			\bar{\phi}_C(a,b,c) &= \max_{s=0,\ldots,b-1} \bar{\phi}(0,s,1)\\
			&= \max_{s=0,\ldots,b-1} \phi(0,s,1) = \phi(0,0,1) \le w \cdot
			\frac{p}{1-p} 
		\end{align*}
		as $(0,0,1)$ and $(0,0,0)$ are the only mining states and by using
		Lemma~\ref{lemma:potential_upper_bound} for the last
		inequality. Also:
		\begin{align*}
			\bar{\phi}_R(a,b,c)
			&= \bar{\phi}(a-b-1,0,0) + b + 1 + c\cdot w - \honestgain \\
			&\ge 1 - p(1-p)w \\
			&\ge w \cdot \frac{p}{1-p} \\
			&\ge \bar{\phi}_C(a,b,c), 
		\end{align*}
		for $p < 0.344$.
		
		\begin{claim}
			For states $(b+1,b,c)$:
			\begin{align*}
				\bar{\phi}(b+1,&b,c) = \bar{\phi}_R(b+1,b,c)\\
				&= \max\{\bar{\phi}_M(b+1,b,c),  
				\bar{\phi}_M(b+1,b,c), \bar{\phi}_C(b+1,b,c)\}.
			\end{align*}
		\end{claim}
		Exactly as before, we have
		$\bar{\phi}_R(b+1,b,c) = b + 1 + c\cdot w - \honestgain \ge
		\bar{\phi}_C(b+1,b,c)$.  For $\bar{\phi}_M(b+1,b,c)$:
		\begin{align*}
			\bar{\phi}_M(b+1,b,c) 
			&= 
			p \bar{\phi}(b+2,b,c) +(1-p) \left(\bar{\phi}(b+1,b+1,c)-g\right)\\
			&\le
			p ((b+2) \lambda +b \mu +\kappa + c\cdot w)\\
			&\quad \quad+(1-p)\left((b+1+c\cdot w)\frac{p}{1-p}-g\right),
		\end{align*}
		by the definition of $\bar{\phi}$ and
		Lemma~\ref{lemma:potential_upper_bound}.  Then:
		\begin{align*}
			\bar{\phi}_R(&a,b,c)
			- \bar{\phi}_M(a,b,c) \ge \\
			&b (1-2 p)^2+(p-1) \left(p^3 
			w-p^2 
			(w+1)+p 
			(2 c\cdot w+5)-c\cdot w\right)\\
			&\quad+1.
		\end{align*}
		Since $b \ge 0$ we only need:
		$$(p-1) \left(p^3 w-p^2 (w+1)+p (2 c\cdot w+5)-c\cdot w\right)+1
		\ge 0.$$ Setting $w=1$, it holds for $p \le 0.344$ for $c = 0$ and
		$p \le 0.442$ for $c=1$.  For smaller values of $p$ it is not
		necessary to set $w$ to it's highest value.
	\end{proof}
	Now, we can use the equivalent of Lemma~\ref{lemma:induction}(whose
	proof has very minor differences) for the strategic release case to
	get that:
	\begin{align*}
		\strategicgain = \lim_{k\rightarrow \infty} \frac{\hat{g}_k(a,b,c)}{k} &\le
		\lim_{k\rightarrow \infty} \frac{\bar{\phi}(a,b,c) + k\cdot(p + p(1-p)w)}{k} \\
		&= p + p(1-p)w.
	\end{align*}
	This shows that his gain is at most what he would get by playing
	$\frontier$, hence a best response.
\end{proof}

Through a procedure similar to Section~\ref{sec:LP}, adjusting the LP
for this case, we obtain the following graph for $d = 8$ for the
minimum value of $w$ required.

\begin{figure}[H]
	\begin{center}
		\begin{tikzpicture}
		\begin{axis}[
		xlabel=$p$,
		ylabel=Minimum $w$,
		grid=major]
		\addplot[color=blue,mark=o] coordinates {
			(0.33, 0)
			(0.34, 134/1000)
			(0.35, 338/1000)
			(0.36, 540/1000)
			(0.37, 721/1000)
			(0.38, 879/1000)
			(0.385, 1)
		};
		\end{axis}
		\end{tikzpicture}
	\end{center}
	\caption{Minimum values of $w$ for the strategic release case.}
\end{figure}
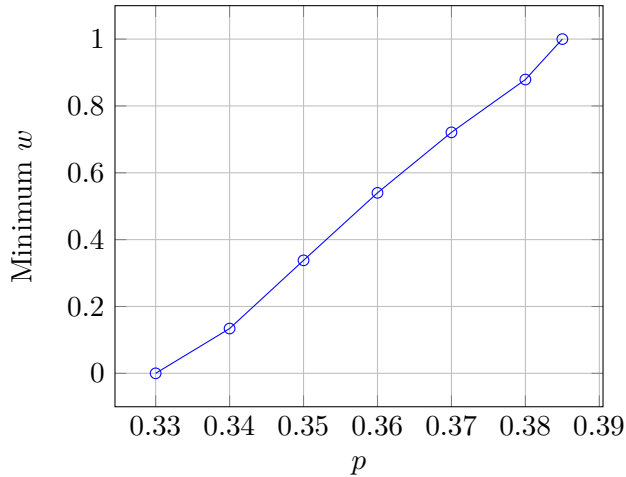
Contrary to Figure~\ref{fig:minimum_w1}, for $p=0.385$ we have
$w \approx 1$. As $d \rightarrow \infty$, we reach $w = 1$ for
$p \approx 0.38$.

\bibliographystyle{plain}
\bibliography{bibliography}

\end{document}